\newcommand{\arXiv}[1]{\href{http://www.arXiv.org/abs/#1}{arXiv:#1}}
\renewcommand\section{\@startsection {section}{1}{\z@}%
	{-1.5ex \@plus -1ex \@minus -.2ex}%nn
	{.3ex \@plus.2ex}%
	{\normalfont\large\bfseries}}
\renewcommand\subsection{\@startsection{subsection}{2}{\z@}%
	{-0.5ex\@plus -1ex \@minus -.2ex}%
	{0.2ex \@plus .2ex}%
	{\normalfont\bfseries}}
\newcommand{\Veff}{V}
\newcommand{\parm}{s}
\newcommand{\Z}{{\mathbb Z}}
\newcommand{\beq}{\begin{equation}}
	\newcommand{\eeq}{\end{equation}}
\newcommand{\beqnn}{\begin{equation*}}
	\newcommand{\eeqnn}{\end{equation*}}
\theoremstyle{proposition}
\newtheorem{proposition}{Proposition}[section]
\theoremstyle{remark}
\theoremstyle{theorem}
\theoremstyle{lemma}
\newtheorem{lemma}{Lemma}[section]
\begin{document}

	\begin{center}
		{\Large\bf Energy cascades  and condensation via coherent dynamics\\ in Hamiltonian systems}
		\vskip 2mm
		{\large Anxo Biasi$^{1}$ and Patrick G\'erard$^{2}$}
		\vskip 1mm
		{\em $^1$ Departamento de F\'isica de Part\'iculas, Universidade de Santiago de Compostela and Instituto Galego de F\'isica de Altas Enerx\'ias (IGFAE), E-15782 Santiago de Compostela, Spain.}
		\vskip -2mm
		{\em $^2$ Universit\'e Paris–Saclay, Laboratoire de Math\'ematiques d'Orsay,\\ CNRS, UMR 8628, F-91405 Orsay, France.}
		\vskip -1mm
		{\small\noindent {\tt $^{1}$ anxo.biasi@gmail.com $\qquad$ $^{2}$ patrick.gerard@universite-paris-saclay.fr}}
		\vskip 0mm
	\end{center}
	\vspace{-0.5cm}
	\begin{center}
		{\bf Abstract}\vspace{-0.6cm}
	\end{center}
	
	\noindent This work makes analytic  progress in the deterministic study of turbulence in Hamiltonian systems by identifying two types of energy cascade solutions and the corresponding large- and small-scale structures they generate. The first cascade represents condensate formation via a highly coherent process recently uncovered, while the second cascade, which has not been previously observed, leads to the formation of other large-scale structures. The concentration of energy at small scales is characterized in both cases by the development of a power-law spectrum in finite time, causing the blow-up of Sobolev norms and the formation of coherent structures at small scales. These structures approach two different types of singularities: a point discontinuity in one case and a cusp in the other. The results are fully analytic and explicit, based on two solvable families of Hamiltonian systems identified in this study.

	%\vfill

%%%%%%%%%%%%%%%%%%%%%%%%%%%%%%%%%%%%%%%%%%%%%%%%%%%%%%%%%%%%%%%%%%%%%%%%%%%%%%%%%%%%%%%%%%%%
%%%%%%%%%%%%%%%%%%%%%%%%%%%%%%%%%%%%%%%%%%%%%%%%%%%%%%%%%%%%%%%%%%%%%%%%%%%%%%%%%%%%%%%%%%%%
%%%%%%%%%%%%%%%%%%%%%%%%%%%%%%%%%%%%%%%%%%%%%%%%%%%%%%%%%%%%%%%%%%%%%%%%%%%%%%%%%%%%%%%%%%%%

	%%%%%%%%%%%%%%%%%%%%%%%%%%%%%%%%%%%%%%%%%%%
	%%%%%%%%%%%%%%%%%%%%%%%%%%%%%%%%%%%%%%%%%%%	
	%%%%%%%%%%%%%%%%%%%%%%%%%%%%%%%%%%%%%%%%%%%
	%%%%%%%%%%%%%%%%%%%%%%%%%%%%%%%%%%%%%%%%%%%
	
	\section{Introduction}
	
	One of the most fascinating features of nonlinear waves is their capability to form structures of vastly different scales. This property has a profound impact on the dynamics of the systems by giving rise to large--- and small---scale elements, such as condensates, solitons, and spikes, which introduce new complexities and behaviors. The formation of these structures is often explained via ``cascades" in momentum space, where energy is transferred toward either high-frequency or low-frequency modes. While this turbulent behavior is reasonably well understood in incoherent regimes, grounded in chaotic dynamics and the wave turbulence theory \cite{NazarenkoBook}, turbulent phenomena  driven by phase-sensitive dynamics (coherent regimes) remain largely unexplored. Despite considerable advances over the past decades \cite{Bourgain,Bourgain2,Kuksin1,Kuksin2,Staffilani2010,GG,Hani2}, predicting when and how cascades happen in coherent regimes is challenging, and further research is needed to classify and understand the phenomena arising from them. Our work makes analytic progress on these questions by presenting an explicit description of cascades driven by highly coherent dynamics and the corresponding structures that emerge from them.
	
	%%%%%%%%%%%%%%%%%%%%%%%%%%%%%%%%%%%%%%%%%%%%%%
	%%%%%%%%%%%%%%%%%%%%%%%%%%%%%%%%%%%%%%%%%%%%%%
	
	\subsection{Fully resonant Hamiltonian systems}
	\label{sec:Setup}
	
	This work focuses on Hamiltonian systems of the form:
	\beq
	i\frac{d\alpha_n}{dt} = \underset{n+m=k+j}{\underbrace{\sum_{m=0}^{\infty}\sum_{k=0}^{\infty}\sum_{j=0}^{\infty}}}C_{nmkj} \bar{\alpha}_m\alpha_k\alpha_j,
	\label{eq:Resonant_Equation}
	\eeq
	where $\alpha_n\in \mathbb{C}$ are complex variables labeled by $n\in\mathbb{N}$, the bar represents complex conjugation, and $C_{nmkj}\in\mathbb{R}$ are time-independent couplings with symmetries: $n\leftrightarrow m$, $k\leftrightarrow j$, and $(n,m)\leftrightarrow(k,j)$. They conserve the ``particle number", the ``energy", and the Hamiltonian:
	\beq
	N = \sum_{n=0}^{\infty} |\alpha_n|^2, \qquad E = \sum_{n=0}^{\infty} n |\alpha_n|^2, \qquad \text{and} \qquad	\mathcal{H} = \frac{1}{2} \underset{n+m=k+j}{\underbrace{\sum_{n=0}^{\infty}\sum_{m=0}^{\infty}\sum_{k=0}^{\infty}\sum_{j=0}^{\infty}}}C_{nmkj} \bar{\alpha}_n\bar{\alpha}_m\alpha_k\alpha_j,
	\label{eq:Conserved_equatities_N_E}
	\eeq
	The denomination for $N$ and $E$ comes from their origin when associated with the nonlinear Schr\"odinger equation \cite{BBCE2}. In addition, these systems enjoy the symmetries: phase-shift ($\alpha_n(t) \to e^{i(\phi + n \theta)} \alpha_n(t)$ with $\phi,\theta\in \mathbb{R}$), scaling ($\alpha_n(t) \to \epsilon \alpha_n(\epsilon^2 t)$ with $\epsilon>0$), and time-reversibility ($\alpha_n(t)\to \bar{\alpha}_n(-t)$).

	The interest in this class of Hamiltonian systems comes from their description of weakly interacting waves and the vast collection of phenomena they display. These systems arise as the resonant approximation of wave models with a fully resonant spectrum of normal frequencies, i.e., $\omega_n = an+b$.  In those cases, $\alpha_n$ represents the amplitude of the $n$th normal mode and the constraint $n+m=k+j$ indicates that resonant interactions, the ones that satisfy  $\omega_n+\omega_m=\omega_k+\omega_j$, dominate. Systems of this form are found in diverse scenarios, such as cold-atoms and nonlinear optics \cite{BBCE1,BBCE2,CMEH}, nonlinear waves on the sphere \cite{CEL} or general relativity \cite{CEV1,Jalmuzna,BEF,MBox,BMR}, among others. For instance, the nonlinear Schr\"odinger equation presents fully resonant spectra when subject to some trapping potentials \cite{BEM2,EquidistantPotentials}. In addition, this class of systems exhibits a rich phenomenology, presenting stationary and time-periodic motions \cite{GGT,GHT,BBCE1,BBCE2,BBE1,CF,BEF,CEL}, breathing modes \cite{E}, Fermi-Pasta-Ulam-Tsingou recurrences \cite{BEM1}, energy cascades \cite{BMR,GG,Xu,BE,Jalmuzna}, or condensation \cite{Biasi}. They have found application in diverse topics such as vortex motion in Bose-Einstein condensates \cite{BBCE1}, formation of tiny black holes in general relativity \cite{BMR}, or the increase of Sobolev norms in Hamiltonian PDEs \cite{Survey_Szego}. All of it makes Hamiltonian systems of the form (\ref{eq:Resonant_Equation}) an interesting source of elaborate phenomena with application in multiple fields. We therefore expect that our results will lead to better compression of energy cascades and structure formation in a broad class of topics.
	
	%%%%%%%%%%%%%%%%%%%%%%%%%%%%%%%%%%%%%%%%%%%%%%%%%%%%%%%%%%%%%%%%%%%%%%%%%
	
	\subsection{Energy cascades}
	\label{subsec:Energy cascades}
	This work characterizes processes of energy transfer from low modes (small $n$) to arbitrarily high ones (large $n$). As later discussed, these processes represent energy concentration in arbitrarily small regions, formation of sharp structures, and loss of regularity. We focus on scenarios where the amount of energy at small scales is strongly suppressed (the amplitude spectrum $|\alpha_n|^2$ decays exponentially for large $n$)  but weakens over time, eventually giving rise to a power-law (asymptotic) spectrum in either finite or infinite time $T$:
	 \beq
	 |\alpha_{n\gg1}(t)|^2\ \sim\ |c(t)|^2 n^{\gamma} \left(\frac{x(t)}{x_c}\right)^n\ \underset{t\to T}{\longrightarrow} \ |c(T)|^2 n^{\gamma},
	 \label{eq:power-law_definition}
	 \eeq
	 with $\gamma\in \mathbb{R}$, $c\in \mathbb{C}$, and $x_c>x\geq0$,  but with $x(t) \to x_c$ as $t\to T$. We label these processes by the power-law exponent $\gamma$, referring to them as {\em $\gamma$-cascades}. In formal terms, these processes of energy migration are quantified by the growth of Sobolev norms:
	\beq
		H^{\xi} = \left(\sum_{n=0}^{\infty}(n+1)^{2\xi}|\alpha_n|^2\right)^{1/2}.
		\label{eq:Sobolev_norms_definition}
	\eeq
	Starting from conditions where all these norms are finite, the question is whether some of them can become unbounded in finite or infinite time, how this happens, and which of them remain finite.
	 
	In fully resonant systems (\ref{eq:Resonant_Equation}), the unbounded growth of Sobolev norms/energy cascades has been only characterized in a few works. From an analytic perspective, most of the results have been obtained around the {\em cubic Szeg\H{o} equation} introduced by one of the present authors in Ref.~\cite{GG}; see Ref.~\cite{Survey_Szego} for a survey. The first studies of this model revealed explicit solutions where Sobolev norms grew as much as desired but remained bounded (the system almost developed a $0$-cascade), while it was later shown that the model admits a superpolynomial growth of these norms. Modifications of the cubic Szeg\H{o} equation led to explicit examples of a $0$-cascade, with exponential growth of Sobolev norms in Ref.~\cite{Xu} ({\em the $\alpha$-Szeg\H{o} equation}) and both exponential and polynomial growth in Ref.~\cite{BE} ({\em $\beta$-Szeg\H{o} equation}). In Refs.~\cite{GG2,GGH} ({\em the damped Szeg\H{o} equation}), it was shown that a damping term enhanced the unbounded increase of Sobolev norms, likewise, other modifications presented in Refs.~\cite{Pocovnicu,Thirouin}. In all these results, Sobolev norms displayed unbounded increase but remained finite (i.e., they exhibited infinite time blow-up). Finite time blow-up was proved for the first time in a work by one of the current authors in Ref.~\cite{Biasi}, with the display of a $-3/2$-cascade. From a numerical perspective, the work in Ref.~\cite{BR} drew much attention to the connection between energy cascades and black hole formation in Einstein equations, triggering a surge of research; see Refs.~\cite{BMR, MBox, Jalmuzna} for specific contributions  and Ref.~\cite{E2} for a review.  See also Refs.~\cite{Bourgain,Staffilani2010,Guardia,Maspero,Hani,Hani2,Thomann,GP,GL,KKK,Maspero2,Camps} and references therein for results on the growth of Sobolev norms beyond the class of fully resonant systems. 
	
	%%%%%%%%%%%%%%%%%%%%%%%%%%%%%%%%%%%%%%%%%%%%%%
	%%%%%%%%%%%%%%%%%%%%%%%%%%%%%%%%%%%%%%%%%%%%%%
	
	\subsection{Condensation}
	\label{sec:Condensation_Intro}

	The phenomenon of condensation is an important element in our study, so we introduce it here.

	The most familiar manifestation of condensation is Bose–Einstein condensation (BEC) in quantum mechanics \cite{Pitaevskii}, where bosons occupy the ground state of the system at sufficiently low temperature, forming a macroscopic coherent object. Analogous behavior has been observed in classical nonlinear waves \cite{Condensation_2004,Condensation_2005,Condensation_2018,Condensation_2020,Condensation_2012} with at least two conserved quantities---commonly referred to as the ``particle number" and the energy. In this context, condensation refers to the accumulation of particle number at the lowest-energy mode, such that a significant fraction is supported by that mode. This phenomenon has been studied extensively in the physics literature of weakly nonlinear random waves under wave turbulence theory \cite{Condensation_2001, Condensation_2004,Condensation_2005,NazarenkoBook,Condensation_2011,Condensation_2019}, though a rigorous mathematical description remains an ongoing problem \cite{Escobedo,StaffilaniTran}.

	Our work is concerned with a distinct process of condensation, called coherent condensation, first introduced in \cite{Biasi}. It entails the accumulation of the particle number $N$ (see \eqref{eq:Conserved_equatities_N_E}) to the lowest-energy mode. Formally, this corresponds to the convergence of the amplitude spectrum to a Kronecker delta distribution centered at mode $n=0$,
	\beq
		|\alpha_n(t)|^2 \underset{t\to T}{\longrightarrow} N \delta_{0,n}.
		\label{eq:Condensation_v0}
	\eeq
	The main distinction between coherent condensation and previous observations lies in the nature of the dynamics. In wave turbulence, condensation admits a statistical description, relying on chaotic dynamics and statistically independent mode phases. By contrast, coherent condensation proceeds deterministically: the phases of $\alpha_n(t)$ align along a straight line for all modes except the lowest one. Therefore, this organized behavior stands in sharp contrast to the traditional condensation processes described in the literature.

	Note that the spectrum (\ref{eq:Condensation_v0}) represents the ground state of the Hamiltonian system (\ref{eq:Resonant_Equation}), with energy $E=0$. Since both $N$ and $E$ are conserved, the evolution should not approach this spectrum from initial conditions with $E>0$. However, we shall see that it happens, in certain sense, thanks to the display of an energy cascade that accompanies the condensation.

	%%%%%%%%%%%%%%%%%%%%%%%%%%%%%%%%%%%%%%%%%%%%%%
	%%%%%%%%%%%%%%%%%%%%%%%%%%%%%%%%%%%%%%%%%%%%%%
	%%%%%%%%%%%%%%%%%%%%%%%%%%%%%%%%%%%%%%%%%%%%%%
	%%%%%%%%%%%%%%%%%%%%%%%%%%%%%%%%%%%%%%%%%%%%%%

	\section{Main results}	
	\label{sec:Main_Results}
	
	This work presents analytic progress on the problems of energy cascades and structure formation in Hamiltonian systems (\ref{eq:Resonant_Equation}). Before, only the system reported in Ref.~\cite{Biasi} was confirmed to display cascades in finite time and we now report two infinite families of systems exhibiting the phenomenon. One of these families admits explicit solutions representing a $-3/2$-cascade in finite time and the dynamical formation of a condensate. The second family displays a $-5/2$-cascade in finite time, which has not been previously observed in the literature. The overall aspects of these results are described below, while a detailed analysis is given in sections~\ref{sec:Z_Hamiltonian_systems} and \ref{sec:Y_Hamiltonian_systems}.

	%%%%%%%%%%%%%%%%%%%%%%%%%%%%%%%%%%%%%%%%%%%%%%%%%%%%%%%%
	%%%%%%%%%%%%%%%%%%%%%%%%%%%%%%%%%%%%%%%%%%%%%%%%%%%%%%%%
	
	\subsection{Results on $\mathbf{-3/2}$-cascades and coherent condensation} 
	
	In section~\ref{sec:Z_Hamiltonian_systems}, we present an infinite family of Hamiltonian systems that admit $-3/2$-cascade solutions. We report full description of these processes and show that they accompany the phenomenon of {\em coherent condensation} introduced in Ref.~\cite{Biasi} and above in section~\ref{sec:Condensation_Intro}.
	
	Fig.~\ref{fig:All_Type_1} illustrates the condensation process described by one of the solutions presented in section ~\ref{sec:Z_Hamiltonian_systems}. We observe the convergence of the amplitude spectrum to the lowest mode: $|\alpha_n|^2 \to N \delta_{0,n}$, the development of the power law: $|\alpha_{n\gg1}|^2\sim n^{-3/2}$, and the separation of the quantities $N$ and $E$ in the spectrum. While $N$ experiences an inverse transfer to end up entirely stored at the lowest mode, $E$ experiences a direct transfer to be stored at arbitrarily high modes. This dual-cascade behavior causes the formation of structures of dramatically different scale in position space. Fig.~\ref{fig:All_Type_1}(e) illustrates it in a one-dimensional periodic box by using $u(t,\theta)= \sum_{n=0}^{\infty}\alpha_n(t) e^{i\theta}$ with $\theta \in [0,2\pi)$. The figure shows the development of a large-scale structure associated with the concentration of $N$ at the lowest mode: a condensate (flat profile), and a small-scale coherent structure associated with the transfer of $E$ to arbitrarily small regions: a ``spike" of finite amplitude but whose width goes to zero, approaching a point discontinuity as $t\to T$. 
	
	Regarding Sobolev norms, the ones with $\xi>1/2$ grow to infinity as the formation of the condensate at time $T$ approaches:
	\beq
	H^{\xi>1/2} = \left(\sum_{n=0}^{\infty}(n+1)^{2\xi}|\alpha_n|^2\right)^{1/2} \underset{t\sim T}{\sim} (T-t)^{2(1-2\xi)}.
	\label{eq:Sobolev_norms_Type_I}
	\eeq
	
	\begin{figure}[h!]  %This figure should be alone on the page, if possible
		\centering
		\includegraphics[width = 8.5cm]{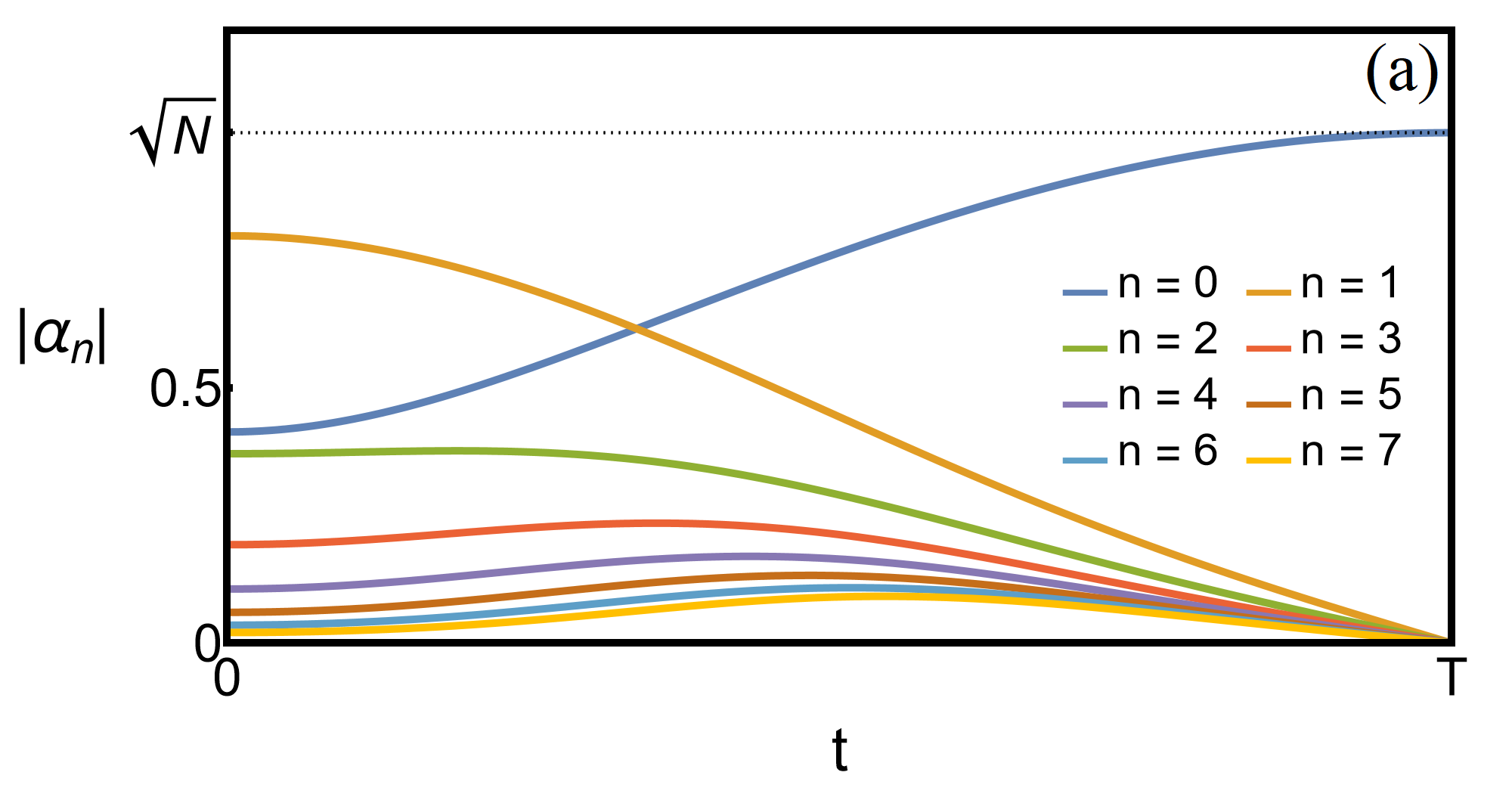}
		\hspace{0.cm}
		\includegraphics[width = 8.5cm]{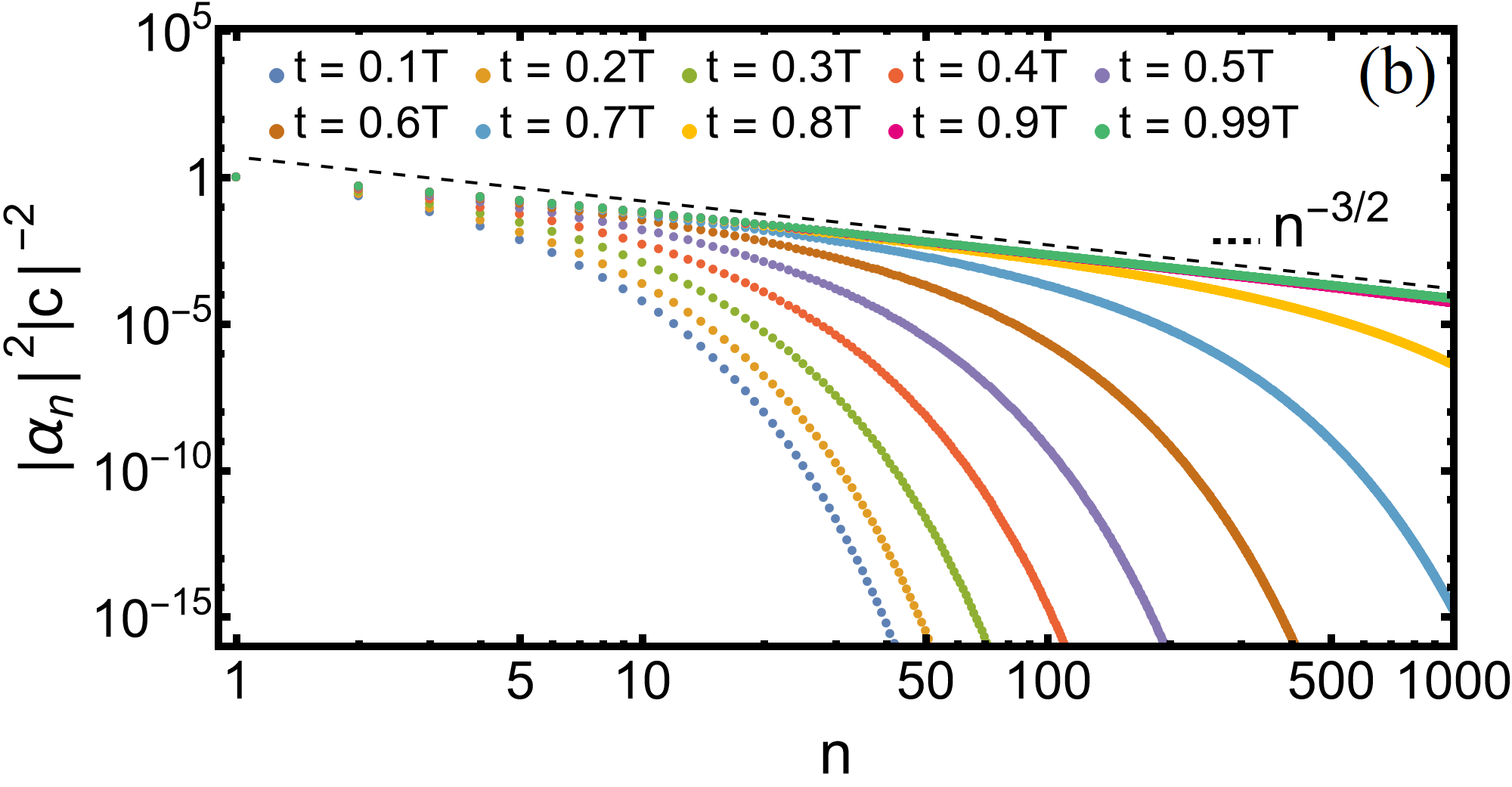}
		
		\vspace{0.5cm}
		\includegraphics[width = 8.5cm]{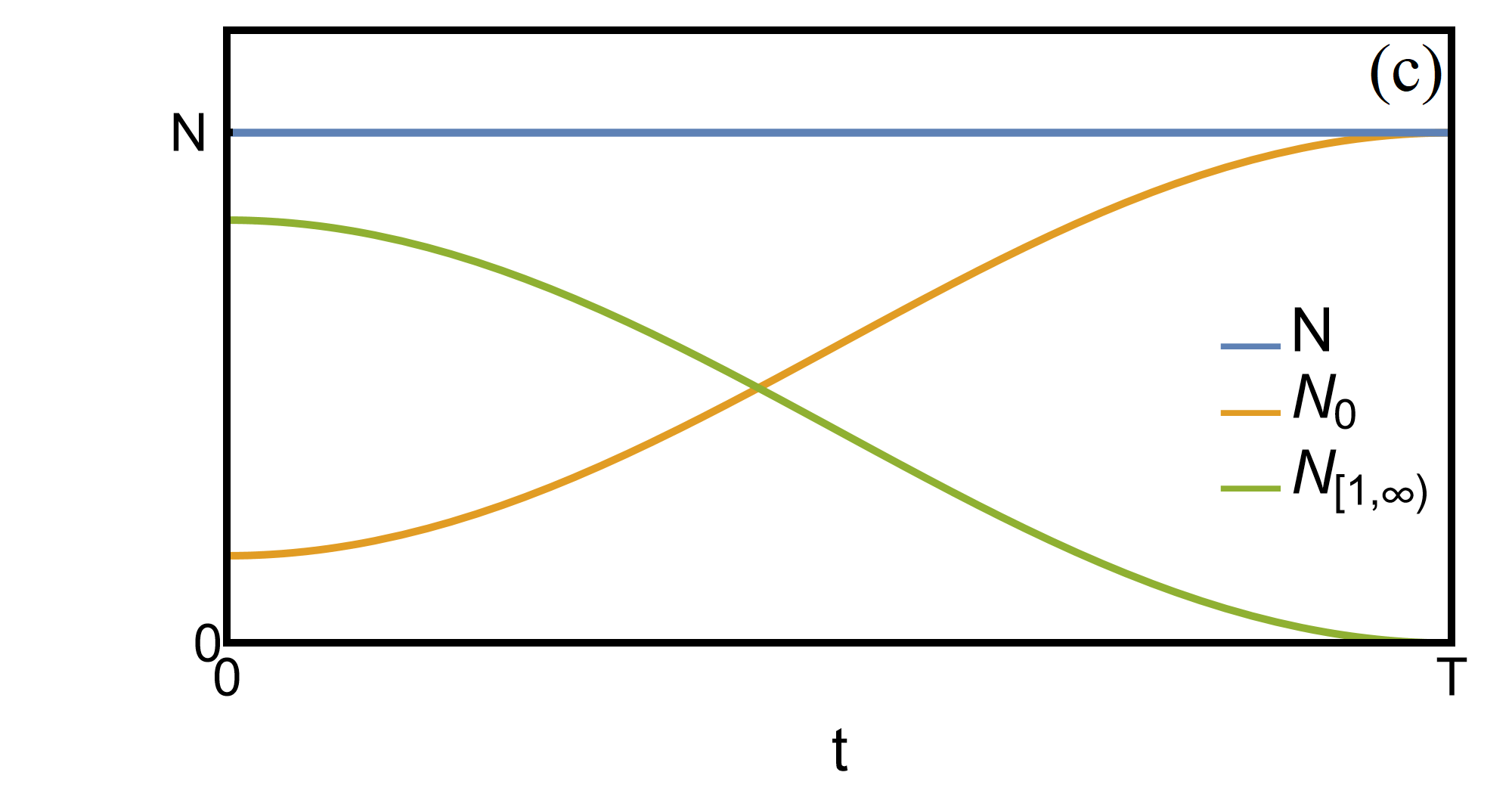} \hspace{0.cm}
		\includegraphics[width = 8.5cm]{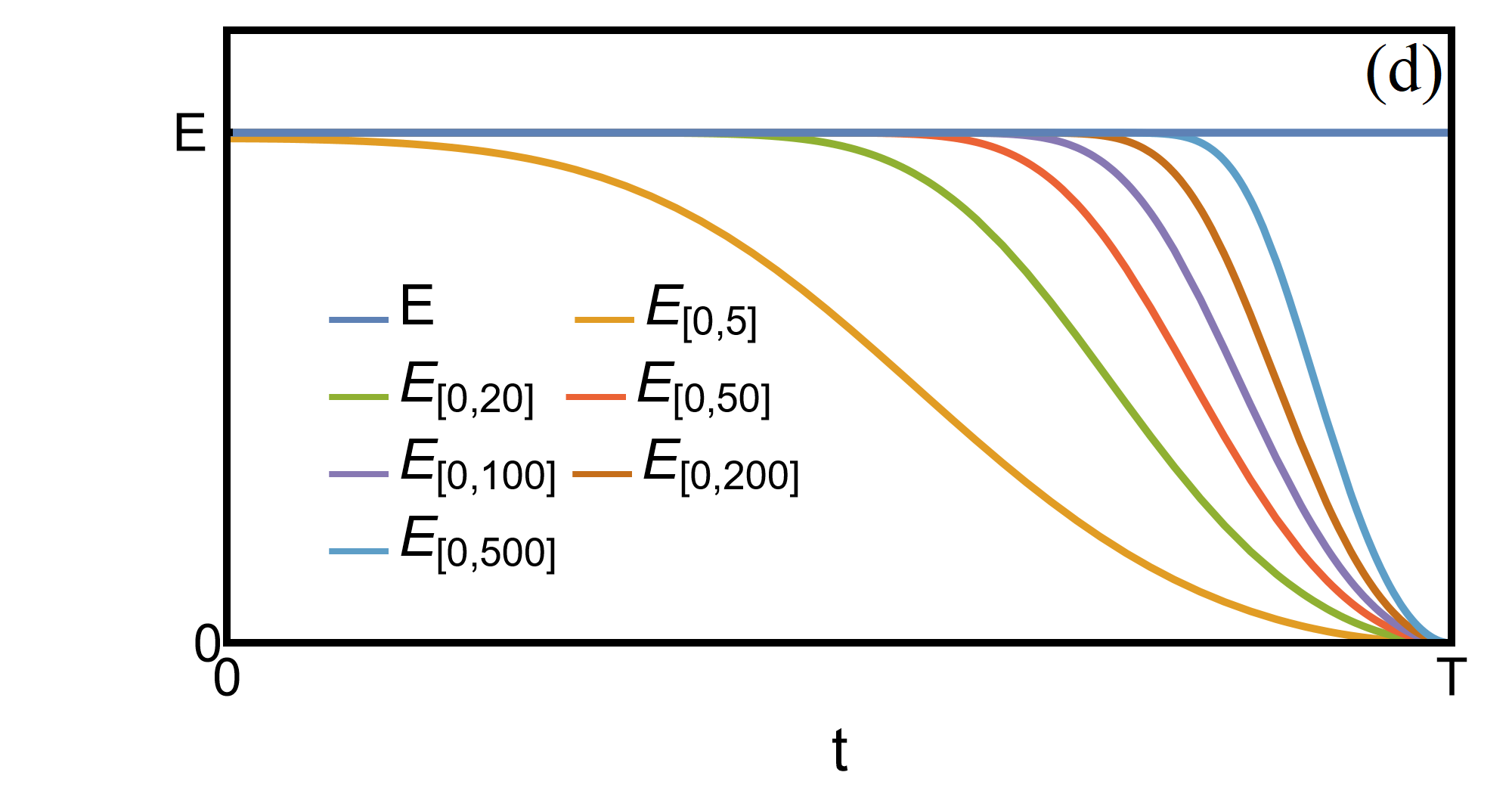}
		
		\vspace{0.5cm}
		\includegraphics[width = \textwidth]{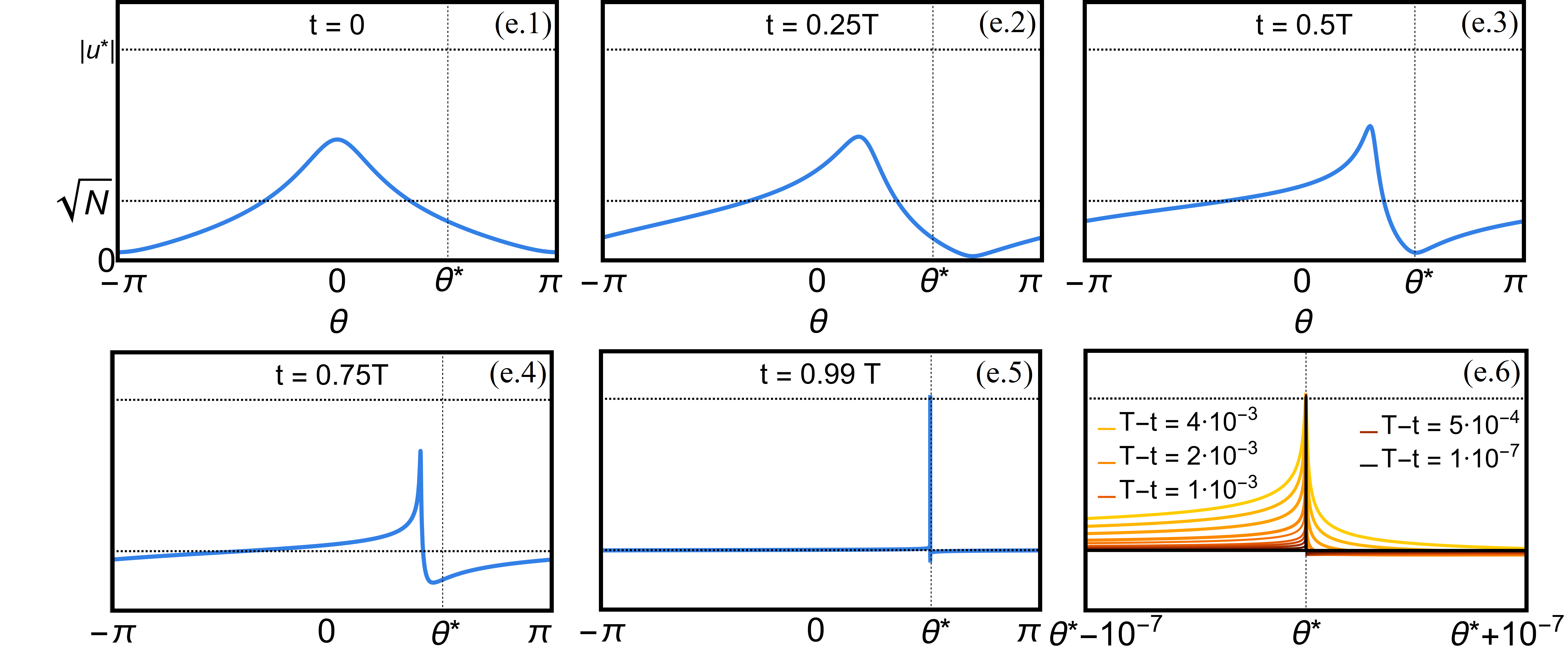}
		\caption{\small Analytic solution representing a $-3/2$-cascade in finite time $T$, and specifically, a process of coherent condensation. (a): Evolution of the first elements of the amplitude spectrum, observing $|\alpha_n(t)|^2 \to N \delta_{0,n}$. (b): Development of the power law $|\alpha_{n\gg 1}|^2\sim |c|^2 n^{-3/2}$, where $|c|^2$ decays to zero. (c, d): Time-evolution of the amount of conserved quantities, $N$ (c) and $E$ (d), stored at the range of modes indicated in the subscript. (e): Position space representation of the condensation process through $|u(t,\theta)|$ at several times. The spike-like structure has finite amplitude but narrows to a point $\theta^*$. Plot (e.6) shows a more detailed view around that point.}
		\label{fig:All_Type_1}
	\end{figure}

	%%%%%%%%%%%%%%%%%%%%%%%%%%%%%%%%%%%%%%%%%%%%%%%%%%%%
	%%%%%%%%%%%%%%%%%%%%%%%%%%%%%%%%%%%%%%%%%%%%%%%%%%%%
	
	\subsection{Results on $\mathbf{-5/2}$-cascades and structure formation}

	Section~\ref{sec:Y_Hamiltonian_systems} provides a family of systems that admit explicit solutions representing $-5/2$-cascades in finite time. To our knowledge, this is the first observation of this type of cascade in the class of Hamiltonian systems (\ref{eq:Resonant_Equation}). Interestingly, they represent dynamics markedly different from the coherent condensation displayed by our $-3/2$-cascade solutions, as appreciated by comparing Fig.~\ref{fig:All_Type_2} with Fig.~\ref{fig:All_Type_1}. This new figure shows that all modes remain excited when the power law is developed and there is no sharp separation of the quantities $N$ and $E$ in the spectrum. Only fractions of them reach the lowest mode and arbitrarily high ones. These differences are also manifested in position space and the behavior of Sobolev norms. Placing the process in a one-dimensional periodic box, $\sum_{n=0}^{\infty}\alpha_n(t) e^{i\theta}$, Fig.~\ref{fig:All_Type_2}(e) shows that $|u(t,\theta)|^2$ does not converge to a flat profile with a narrowing spike. Instead, it has a dependence on $\theta$ at large scales, and develops a different kind of small-scale structure: a sharpening ``pointed corner", which approaches a cusp as $t\to T$. Regarding Sobolev norms, only the ones with $\xi \geq 3/4$ diverge as $t\to T$:
	\beq
		H^{\xi>3/4} \underset{t\sim T}{\sim} (T-t)^{3-4\xi}, \qquad \text{and} \qquad H^{3/4} \underset{t\sim T}{\sim} \ln\left(\frac{1}{T-t}\right).
	\label{eq:Sobolev_norms_Type_II}
	\eeq

	\begin{figure}[h!] %This figure should be alone on the page, if possible
		\centering
		
		\vspace{0.7cm}
		\includegraphics[width = 8.5cm]{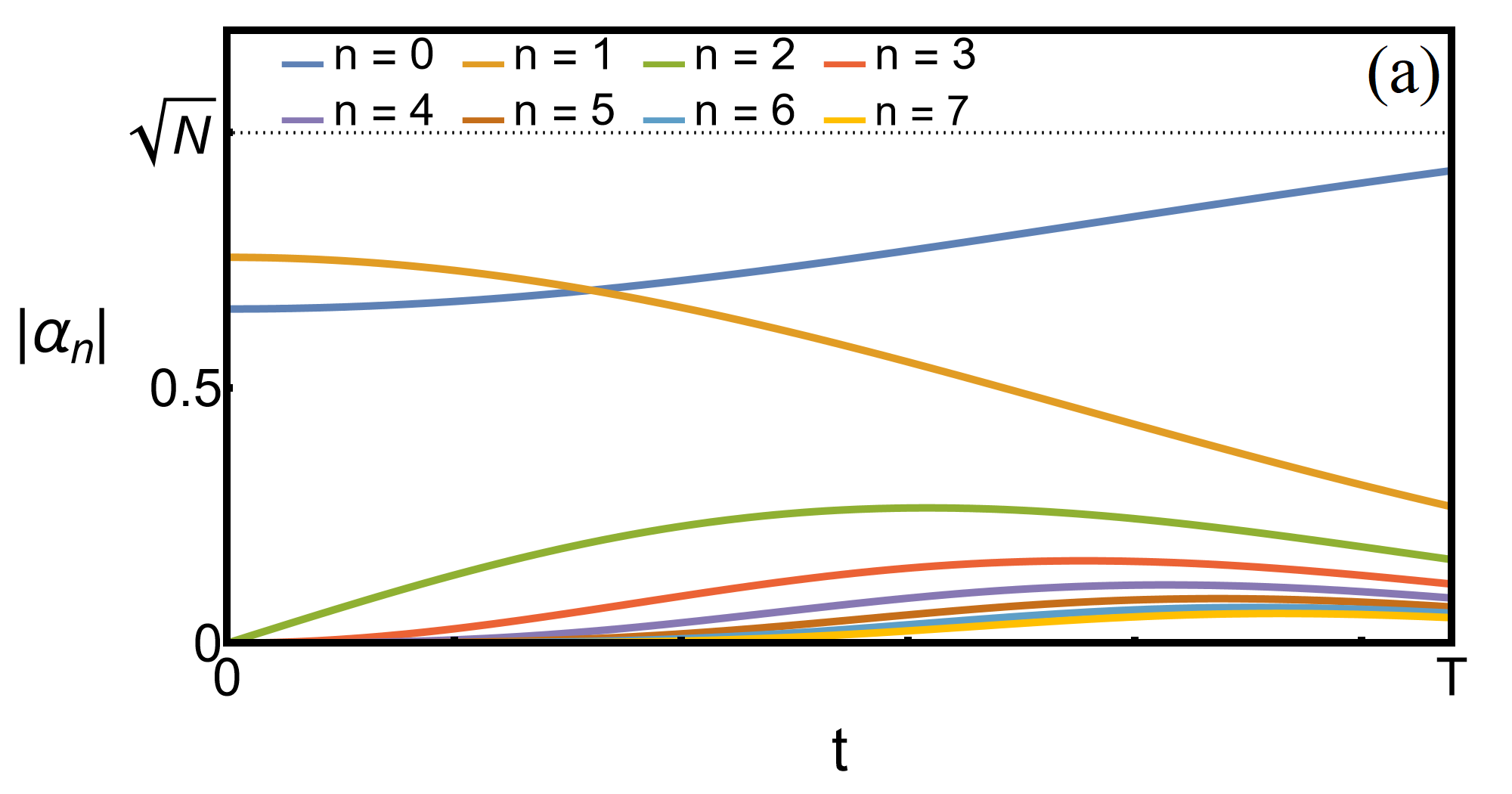} \hspace{0.cm}
		\includegraphics[width = 8.5cm]{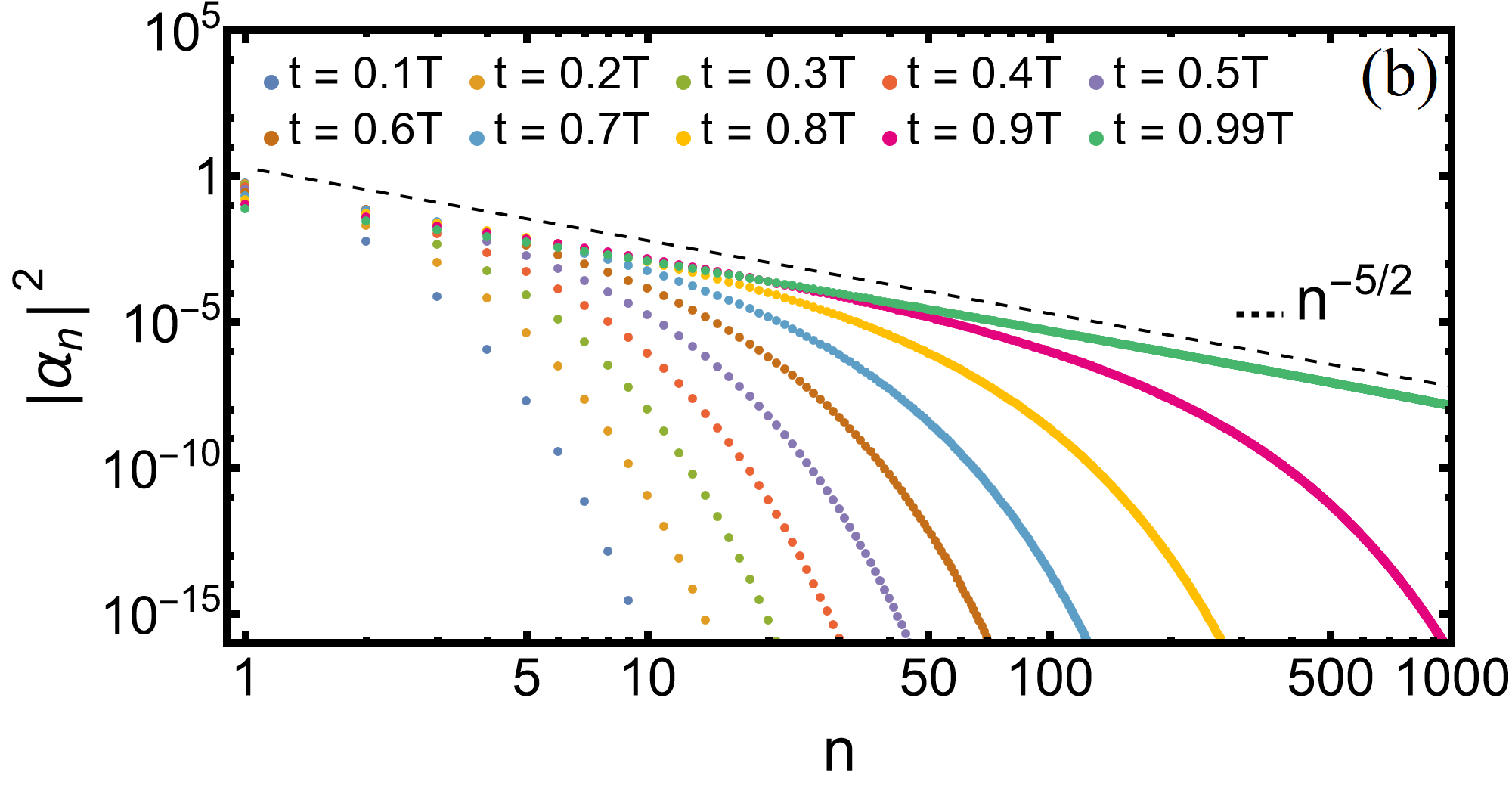}
		
		\vspace{0.99cm}
		\includegraphics[width = 8.5cm]{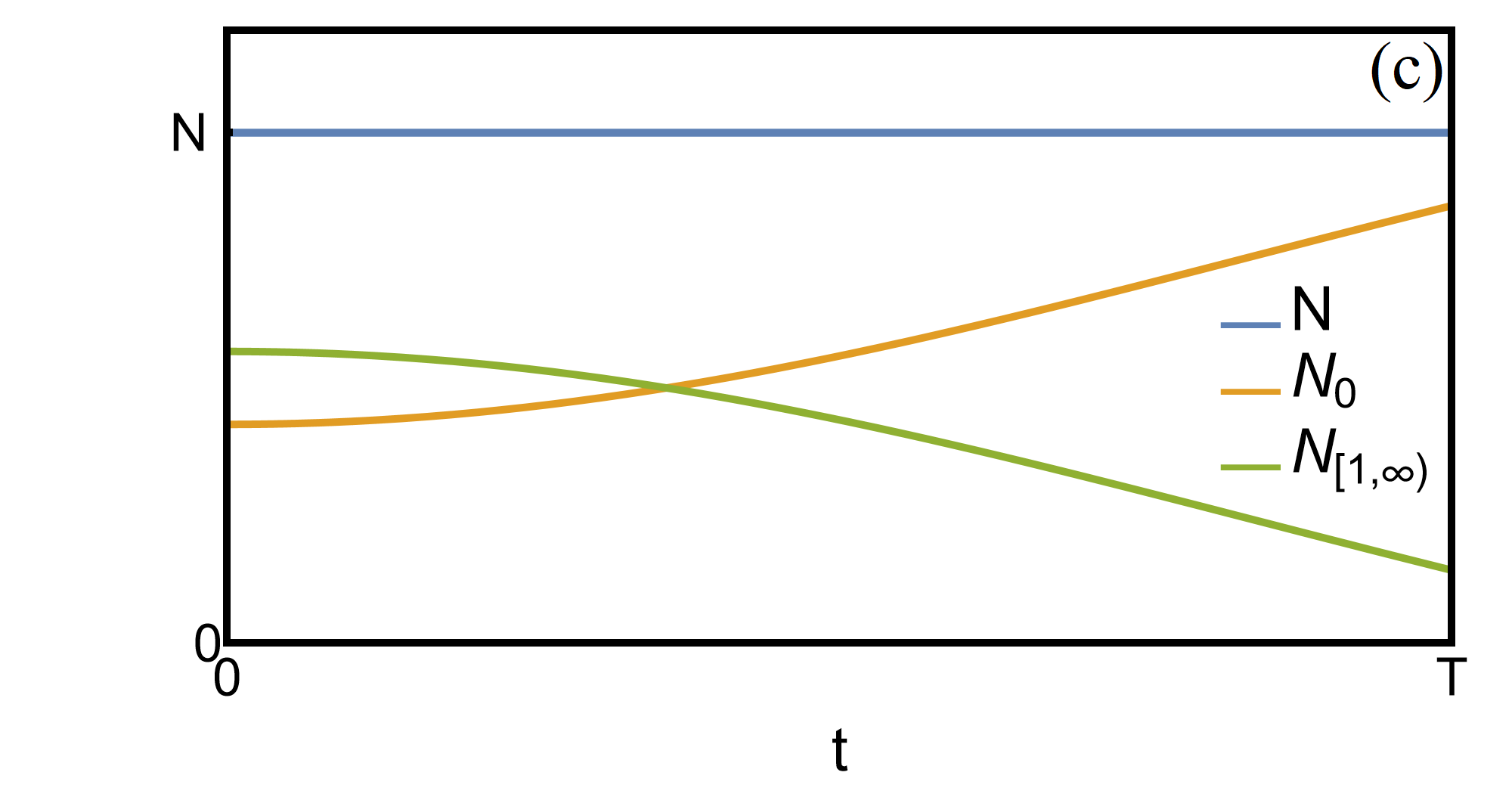} \hspace{0.cm}
		\includegraphics[width = 8.5cm]{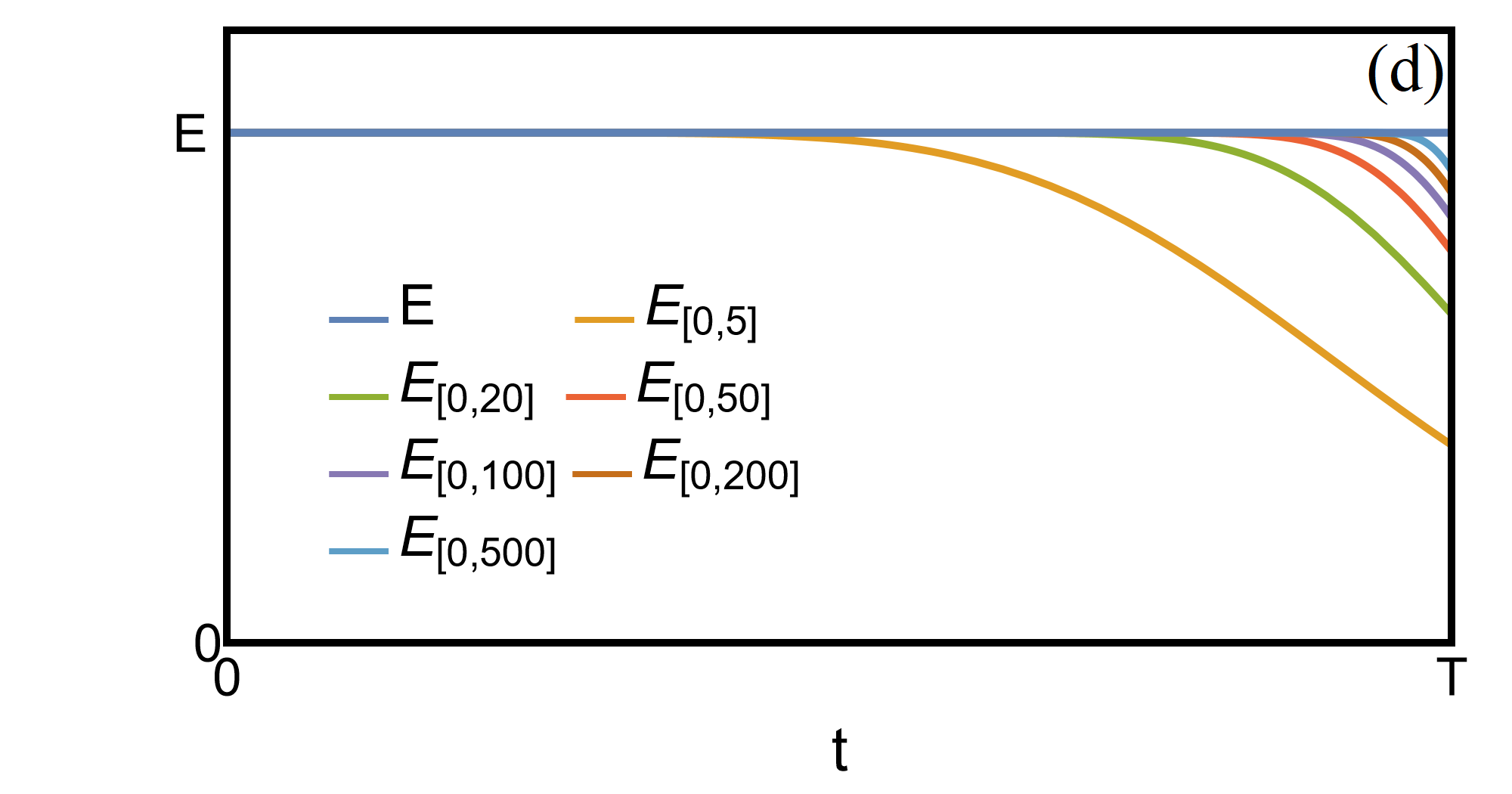}
		
		\vspace{0.99cm}
				\centering
		\includegraphics[width = \textwidth]{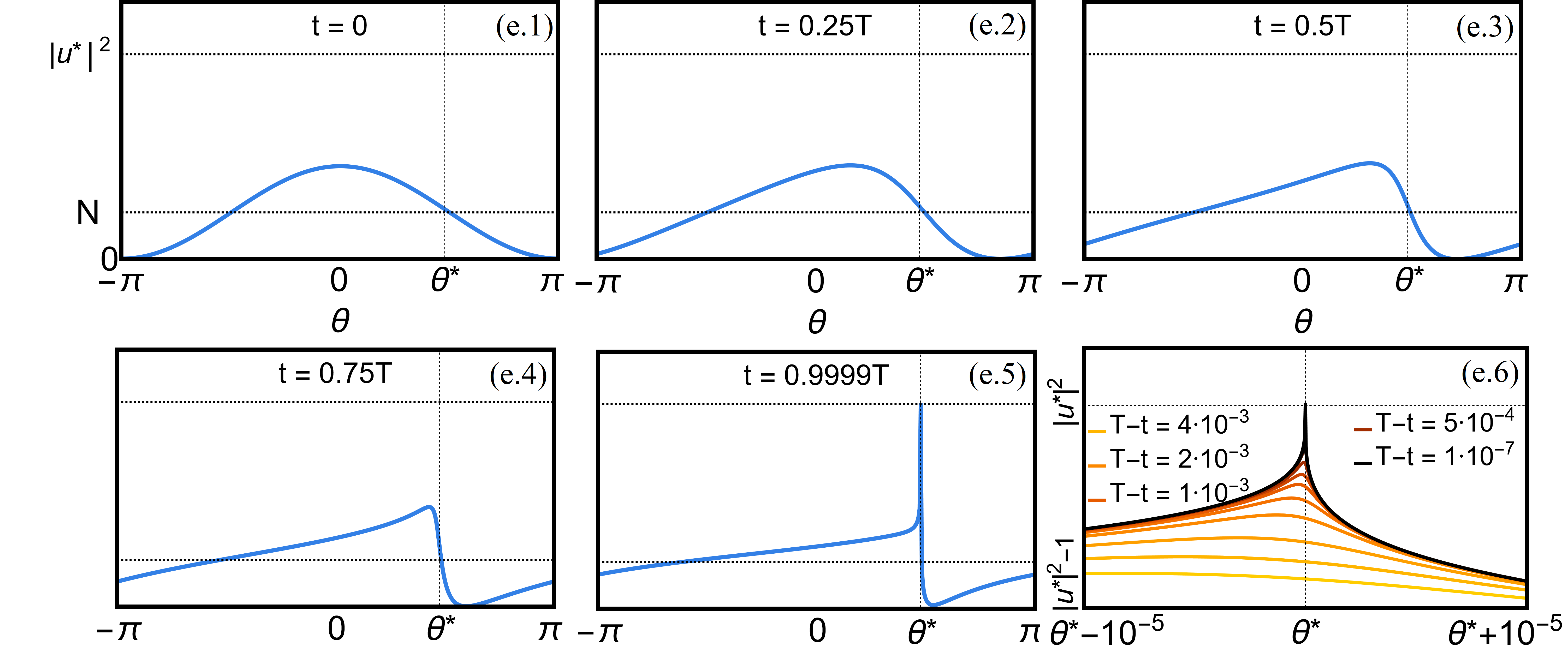}

		\caption{\small Analytic solution representing a $-5/2$-cascade in finite time $T$. (a): Evolution of the first elements of the amplitude spectrum. (b): Development of the power law $|\alpha_{n\gg 1}|^2\sim n^{-5/2}$. (c, d): Time-evolution of the amount of $N$ (c) and $E$ (d) stored at the range of modes indicated in the subscript. (e): Position space representation $|u(t,\theta)|^2$ at several times. A sharpening pointed corner arises at a point $\theta^*$. Plot (e.6) shows a more detailed view around that point. Note the different nature between the singularities emerging here and in Fig.~\ref{fig:All_Type_1}(e.6). \vspace{0.5cm}}
		\label{fig:All_Type_2}
	\end{figure}

	%%%%%%%%%%%%%%%%%%%%%%%%%%%%%%%%%%%%%%%%%%%%%%%%%%%
	%%%%%%%%%%%%%%%%%%%%%%%%%%%%%%%%%%%%%%%%%%%%%%%%%%%
	%%%%%%%%%%%%%%%%%%%%%%%%%%%%%%%%%%%%%%%%%%%%%%%%%%%
	%%%%%%%%%%%%%%%%%%%%%%%%%%%%%%%%%%%%%%%%%%%%%%%%%%%

	\section{The Z-Hamiltonian systems: $\mathbf{-3/2}$-cascades \& coherent condensation}
	\label{sec:Z_Hamiltonian_systems}
	
	This section presents our results on $-3/2$-cascades. They are centered on the following family of Hamiltonian systems we have found. They are named {\em Z-Hamiltonians} and are given by the couplings:
	\beq
	C_{nmkj}^{(\parm)} = \left( \frac{\parm-1}{2}(n+m)+1\right) \frac{f_n f_m f_k f_j}{f_{n+m}^2},
	\label{eq:C_nmij_equation}
	\eeq
	where $f_n=\sqrt{A_n^{(\parm)}}$ and $A_n^{(\parm)}$ are Fuss-Catalan numbers \cite{FussCatalan} parameterized by $\parm\geq 1$
	\beq
	A_n^{(\parm)} = \frac{\Gamma(\parm n+1)}{\Gamma((\parm-1)n+2)\Gamma(n+1)}.
	\label{eq:Fuss_Catalan_numbers}
	\eeq
	The form of these coefficients is peculiar but crucial for our analysis. It guarantees the existence of an invariant manifold that we exploit in section~\ref{sec:invariant_manifold} to study energy cascades and condensation processes. This family of Hamiltonian structures was constructed through a trial-and-error process, aimed at generalizing two previously known systems sharing a similar invariant manifold: the model exhibiting coherent condensation \cite{Biasi} with $\parm=2$, and the well-studied cubic Szeg\H{o} equation \cite{GG} with $\parm=1$. Most systems of this kind \cite{BE,Xu,GGH,GG2} have been developed by such ad hoc procedures. However, we are now developing a fully constructive method that will be presented in forthcoming work.

		We here discuss the general case $\parm>1$ and make connections with the results for $\parm=1$ in section~\ref{sec:The cubic Szego equation}. However, before proceeding further, it is worth mentioning that the Z-systems belong to the same class of Hamiltonian structures observed in physical models, in  the sense that the coefficients $C_{nmkj}$ exhibit, at most, polynomial growth for large indices,  such as found for nonlinear Schr\"odinger equations \cite{BMP} or relativistic wave equations \cite{CEV2}. Therefore, we expect to observe  dynamics exhibited by the Z-Hamiltonians in physical scenarios. In this sense, our systems demonstrate the phenomena to look for and provide insight into the mechanisms that govern them. The polynomial growth of the coefficients comes by using the resonance condition $n+m=k+j$ and the asymptotic growth 
	\beq
	f_{n\gg1} \sim \left(\frac{\parm}{2\pi (\parm-1)^3}\right)^{1/4} n^{-3/4} x_c^{-n/2},
	\label{eq:fgrowth}
	\eeq 
	where $x_c = (\parm-1)^{\parm-1}/\parm^\parm < 1$. This latter quantity will be of great importance in subsequent sections.

	%%%%%%%%%%%%%%%%%%%%%%%%%%%%%%%%%%%%%%%%%%%%%%
	%%%%%%%%%%%%%%%%%%%%%%%%%%%%%%%%%%%%%%%%%%%%%%
	%%%%%%%%%%%%%%%%%%%%%%%%%%%%%%%%%%%%%%%%%%%%%%
	%%%%%%%%%%%%%%%%%%%%%%%%%%%%%%%%%%%%%%%%%%%%%%
	
	\subsection{Local well-posedness}
	\label{sec:Local_well_posedness}
	
	Before entering into the phenomenon of coherent condensation, we settle the Z-Hamiltonian systems on solid mathematical ground by proving their local well-posedness.

\begin{proposition}\label{th:WPZ}
Let $\xi >3/2$ and $(\alpha_{n,0})_{n\geq 0}$ be a sequence of complex numbers such that
\beq \sum_{n=0}^\infty (1+n)^{2\xi}\vert \alpha_{n,0}\vert ^2 \leq R<\infty \ . \label{eq:Zdata}\eeq
There exists $\tau =\tau (R)>0$ and a unique sequence $(\alpha_n(t))_{n\geq 0}$ of $C^1$ functions 
on the interval $[-\tau ,\tau ]$ satisfying $\alpha_n(0)=\alpha_{n,0}$, the estimate 
\beq
\sup_{|t|\leq \tau } \sum_{n=0}^\infty (1+n)^{2\xi}|\alpha_n(t)|^2<\infty ,
\label{eq:Zbounds}
\eeq
and Equation \eqref{eq:Resonant_Equation} with \eqref{eq:C_nmij_equation}. Furthermore, the conservation laws \eqref{eq:Conserved_equatities_N_E}
hold for this solution.
\end{proposition}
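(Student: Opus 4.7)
A direct Banach fixed-point iteration in $C([-\tau,\tau], h^\xi)$ is obstructed by the unboundedness of the map $\alpha\mapsto F_n(\alpha):=\sum_{n+m=k+j}C_{nmkj}^{(\parm)}\,\bar\alpha_m\alpha_k\alpha_j$ on $h^\xi$: the single term $(m,k,j)=(0,0,n)$ already yields $C_{n,0,0,n}|\alpha_0|^2\alpha_n\sim n\,|\alpha_0|^2\alpha_n$, whose $h^\xi$-norm is controlled only by the stronger $h^{\xi+1}$-norm. My plan is therefore to combine a Galerkin approximation with an $h^\xi$-energy estimate, exploiting the fact that such ``phase-rotation'' contributions are real relative to $\bar\alpha_n F_n$ and so drop out of $\frac{d}{dt}\|\alpha\|_{h^\xi}^2$. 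As a preliminary, combining the asymptotics \eqref{eq:fgrowth} with the cancellation $x_c^{-(n+m+k+j)/2}=x_c^{-(n+m)}$ coming from the resonance $n+m=k+j$ yields the polynomial kernel bound
\[
|C_{nmkj}^{(\parm)}|\leq K(\parm)\,(1+n+m)^{5/2}\,\bigl[(1+n)(1+m)(1+k)(1+j)\bigr]^{-3/4}.
\]

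For each $N\in\mathbb N$, the Hamiltonian truncation to modes $n\leq N$ is a finite-dimensional ODE, admitting a unique global smooth solution $\alpha^{(N)}(t)$ that conserves its truncated $N^{(N)}$, $E^{(N)}$, $\mathcal{H}^{(N)}$. Using the three involutions $n\leftrightarrow m$, $k\leftrightarrow j$, $(n,m)\leftrightarrow(k,j)$ respected by $C_{nmkj}$ to symmetrize, the $h^\xi$-energy identity takes the form
\[
\frac{d}{dt}\|\alpha^{(N)}\|_{h^\xi}^2 \;=\; \frac{1}{2}\,\Im\sum_{n+m=k+j}\!W_{nmkj}\,C_{nmkj}^{(\parm)}\,\bar\alpha_n^{(N)}\bar\alpha_m^{(N)}\alpha_k^{(N)}\alpha_j^{(N)},
\]
where $W_{nmkj}=(1+n)^{2\xi}+(1+m)^{2\xi}-(1+k)^{2\xi}-(1+j)^{2\xi}$ vanishes exactly on the resonant diagonal $\{n,m\}=\{k,j\}$, i.e.\ precisely the configurations that obstructed the naive fixed-point argument. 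Off-diagonally the mean value theorem gives $|W_{nmkj}|\lesssim (1+n+m)^{2\xi-1}|n-k|$; combined with the kernel bound, the resonance identity $n-k=j-m$, and a dyadic splitting of the sum according to which of $\{n,m,k,j\}$ is largest, I expect to obtain $\bigl|\tfrac{d}{dt}\|\alpha^{(N)}\|_{h^\xi}^2\bigr|\leq C\|\alpha^{(N)}\|_{h^\xi}^4$. Gronwall then yields a uniform bound $\sup_{|t|\leq\tau}\|\alpha^{(N)}(t)\|_{h^\xi}\leq 2R^{1/2}$ for $\tau=\tau(R)$ small enough.

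Banach--Alaoglu extracts a weak-$*$ limit $\alpha(t)\in h^\xi$ of some subsequence $(\alpha^{(N_k)}(t))_k$, and continuity of each component $F_n$ on bounded subsets of $h^\xi$ lets one pass to the limit in the integrated ODE; the $C^1$-regularity of each $\alpha_n$ then follows directly from the equation. Uniqueness is obtained by applying the same energy estimate to the difference $\alpha-\beta$ of two solutions. The conservation of $N$, $E$, $\mathcal H$ follows by termwise differentiation---justified by the four-linear analogue of the absolute-convergence estimate above---together with the $(n,m)\leftrightarrow(k,j)$ symmetry, which renders the resulting four-fold sums real. The main technical obstacle lies in closing the 4-linear bound in the energy estimate: although $W_{nmkj}=0$ on the diagonal neutralizes the dangerous resonant terms, the polynomial $(1+n+m)^{5/2}$ loss from the kernel must still be absorbed by the commutator gain $|W_{nmkj}|\lesssim (1+n+m)^{2\xi-1}|n-k|$ together with the four $(1+\cdot)^{-3/4}$ decay factors, and it is precisely the regularity excess $\xi-3/2>0$ that makes the arithmetic close.
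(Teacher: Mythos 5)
Your core strategy coincides with the paper's: a Galerkin truncation made global by mass conservation, followed by a symmetrized $h^\xi$-energy identity in which the commutator weight $W_{nmkj}=(1+n)^{2\xi}+(1+m)^{2\xi}-(1+k)^{2\xi}-(1+j)^{2\xi}$ kills the diagonal configurations, a mean-value bound on $W_{nmkj}$ off the diagonal, and a Cauchy--Schwarz closing of the quartic sum using $\xi>3/2$ to control $\sum n|\alpha_n|$. (Your kernel bound keeps the $(1+m)^{-3/4}(1+j)^{-3/4}$ decay that the paper discards by simply noting $f_nf_m/f_{n+m}$ is bounded; this is if anything slightly sharper.) The one genuine divergence is the limiting procedure: you extract a weak-$*$ limit by compactness, whereas the paper proves a quantitative contraction between consecutive truncations, $\sup_{|t|\leq\tau}\|\alpha^{(L+1)}-\alpha^{(L)}\|_{\ell^2}\lesssim|\alpha_{L+1,0}|+(L+1)^{-(\xi-1/2)}$, which gives strong uniform convergence directly and, as a byproduct, contains the mechanism for uniqueness. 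Your compactness route is legitimate but needs the (routine, unstated) equicontinuity-in-time input to pass to the limit in the integrated equation.

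The step that does not survive scrutiny as written is uniqueness ``by applying the same energy estimate to the difference.'' The $h^\xi$ estimate relies entirely on the vanishing of $W_{nmkj}$ on $\{n,m\}=\{k,j\}$, and that cancellation requires all four factors in the quartic form to be the \emph{same} solution. For $\delta=\alpha-\beta$ the telescoped nonlinearity $F_n(\alpha)-F_n(\beta)$ mixes $\alpha$, $\beta$, $\delta$, and the involution $(n,m)\leftrightarrow(k,j)$ no longer pairs each term with its complex conjugate, so the diagonal terms reappear: e.g.\ the configuration $(m,k,j)=(0,0,n)$ contributes $\sim(1+n)^{2\xi}\,n\,|\delta_n||\delta_0||\alpha_0||\alpha_n|$ to $\frac{d}{dt}\|\delta\|_{h^\xi}^2$, whose sum over $n$ is controlled only by $\|\delta\|_{h^\xi}\|\alpha\|_{h^{\xi+1}}$ --- a full derivative more than you have. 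The fix (and what the paper effectively does) is to run the difference estimate in the \emph{weaker} norm $\ell^2$: there the weight is constant, the genuinely diagonal term $n|\beta_0|^2|\delta_n|^2$ is real and drops under $\Im$, and the remaining terms close as $\frac{d}{dt}\|\delta\|_{\ell^2}^2\lesssim\|\delta\|_{\ell^2}^2$ using the a priori $h^\xi$ bounds on $\alpha$ and $\beta$ (here $\xi>3/2$ is used again through $\sum n|\alpha_n|<\infty$). With that substitution your argument is complete.
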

\begin{proof}
We will make an extensive use of the following lemma, which is a simple consequence of the Cauchy--Schwarz inequality.
\begin{lemma}\label{lem:sums}
Let $(a_k)_{k\in \Z}, (b_m)_{m\in \Z}, (c_n)_{n\in \Z},(d_j)_{j\in \Z},$ be $\Z$--indexed sequences of nonnegative numbers.The following estimates hold.
\begin{align}
&     \sum_{k+m+n=0}a_kb_mc_n\leq \left (\sum_k a_k^2\right )^{1/2} \left (\sum_m b_m^2\right )^{1/2} \left(\sum_n c_n\right ),\label{eq:3sum}\\
&     \sum_{k+m+n=0}a_kb_mc_n d_j\leq \left (\sum_k a_k^2\right )^{1/2} \left (\sum_m b_m^2\right )^{1/2} \left(\sum_n c_n\right )\left(\sum_j d_j\right ),\label{eq:4sum}
\end{align}
\end{lemma}
Let us come to the proof of Proposition \ref{th:WPZ}. For every positive integer $L$, we define 
$$\alpha_{n,0}^{(L)}:= \alpha_{n,0}{\bf 1}_{[0,L]}(n)\ ,$$
and we consider the following cutoff approximation of \eqref{eq:Resonant_Equation}, 
\beq i\dot \alpha_n^{(L)}=\sum_{\substack{0\leq j,k,m\leq L \\ j+k=m+n}}C_{nmkj}\alpha_j^{(L)}\alpha_k^{(L)}\overline \alpha_m^{(L)}\ ,\ \alpha_n^{(L)}(0)=\alpha_{n,0}^{(L)}\ .
\label{eq:Zapprox}
\eeq 
The Cauchy--Lipschitz theorem ensures the existence of a unique solution to this ODE on a small time interval. Furthermore, the symmetries of the coefficients $C_{nmkj}$ yield the following mass conservation law,
\beq
\sum_{n=0}^L |\alpha_n^{(L)}(t)|^2=\sum_{n=0}^L |\alpha_{n,0}^{(L)}|^2\ ,
\label{eq:mass}
\eeq 
from which we infer the existence of the solution for all times. We now claim the following key a priori estimate, for some $\tau =\tau(R)>0$ to be chosen,
\beq \sup_{|t|\leq \tau }\sum_{n=0}^L n^{2\xi} |\alpha_n^{(L)}(t)|^2 \leq 2R\ .
\label{eq:Zest}
\eeq 
Let us prove \eqref{eq:Zest}. Set $$H^\xi _L(t):=\sum_{n=0}^L n^{2\xi}|\alpha_n^{(L)}(t)|^2\ .$$
Then, using again the symmetries of $C_{nmkj}$, we have
\begin{align*}
\frac{dH^\xi_L}{dt}&=2{\rm Im}\sum_{\substack{0\leq j,k,m,n\leq L\\ j+k=m+n}}n^{2\xi}C_{nmkj}\alpha_j^{(L)}(t)\alpha_k^{(L)}(t)\overline \alpha_m^{(L)}(t)\overline \alpha_n^{(L)}(t)\\
&=\frac{1}{2i}\sum_{\substack{0\leq j,k,m,n\leq L\\ j+k=m+n}}\frac{f_nf_mf_kf_j}{f_{n+m}f_{k+j}}\left (\frac{\parm-1}{2}(n+m)+1\right )(n^{2\xi}+m^{2\xi}-k^{2\xi}-j^{2\xi})\alpha_j^{(L)}(t)\alpha_k^{(L)}(t)\overline \alpha_m^{(L)}(t)\overline \alpha_n^{(L)}(t)\ .
\end{align*}
Notice that, from the asymptotic estimate \eqref{eq:fgrowth}, we know that the quantity $f_nf_m/f_{n+m}$ is bounded. On the other hand, we have the following elementary estimate,
$$(n+m)|n^{2\xi}+m^{2\xi}-k^{2\xi}-j^{2\xi}|\lesssim \max (m,n)^\xi \max (j,k)^\xi (\min (j,k)+\min (m,n))\ .$$
Indeed, assuming e.g. $m\leq n$ and $j\leq k$, and using $|n-k|=|j-m|, k=n+m-j\leq 2n, n=j+k-m\leq 2k$, we obtain
\begin{align*}
(n+m)|n^{2\xi}+m^{2\xi}-k^{2\xi}-j^{2\xi}|&\lesssim n|j-m|(n^{2\xi -1}k^{2\xi -1})\\
&\lesssim (j+m)n^\xi k^\xi \ .
\end{align*}
Consequently, using \eqref{eq:4sum} and \eqref{eq:mass}, we infer
\begin{align*}
\left |\frac{dH^\xi_L}{dt}\right |&\lesssim \sum_{\substack{0\leq j,k,m,n\leq L\\ j+k=m+n}}(j+m)n^\xi k^\xi |\alpha_j^{(L)}(t)||\alpha_k^{(L)}(t)||\alpha_m^{(L)}(t)||\alpha_n^{(L)}(t)|\\
&\lesssim H^\xi _L(t)^2\ ,
\end{align*} 
where we have used, for every $\xi >3/2$,
$\displaystyle{\sum_{n=0}^L n|\alpha_n^{(L)}(t)|\leq B(\xi ) H^\xi_L(t)^{1/2}}\ .$
Since $H^\xi_L (0)$ is uniformly bounded by $R$ in view of \eqref{eq:Zdata}, we get $H^\xi _L(t)\leq 2R$ for $|t|\leq \tau (R)$ small enough.\\
As a final step of the proof, we grab some contraction estimate, by writing
\begin{align*}
i\frac d{dt}(\alpha_n^{(L+1)}(t)-\alpha_n^{(L)}(t))&=\sum_{\substack{0\leq j,k,m\leq L\\ j+k=m+n}}C_{nmkj}(\alpha_j^{(L+1)}(t)-\alpha_j^{(L)}(t))\alpha_k^{(L+1)}(t)\overline \alpha_m^{(L+1)}(t)\\
&+ \sum_{\substack{0\leq j,k,m\leq L\\ j+k=m+n}}C_{nmkj}\alpha_j^{(L)}(t))(\alpha_k^{(L+1)}(t)-\alpha_k^{(L)}(t))\overline \alpha_m^{(L+1)}(t)  \\
&+\sum_{\substack{0\leq j,k,m\leq L\\ j+k=m+n}}C_{nmkj}\alpha_j^{(L)}(t)\alpha_k^{(L)}(t)(\overline \alpha_m^{(L+1)}(t)-\overline \alpha_m^{(L)}(t))\\
&+\sum_{\substack{0\leq j,k,m\leq L+1,\\ L+1\in \{j,k,m\},\\ j+k=m+n}}C_{nmkj}\alpha_j^{(L+1)}(t)\alpha_k^{(L+1)}(t)\overline \alpha_m^{(L+1)}(t) .
\end{align*}
Using the bound \eqref{eq:Zest}, we infer, for $|t|\leq \tau$,
$$
\left | \frac d{dt}\sum_{n=0}^\infty |\alpha_n^{(L+1)}(t)-\alpha_n^{(L)}(t)|^2\right | \lesssim \sum_{n=0}^\infty |\alpha_n^{(L+1)}(t)-\alpha_n^{(L)}(t)|^2+\frac 1{(L+1)^{2\xi -1}},
$$
and finally
$$ \sup_{|t|\leq \tau }\left (\sum_{n=0}^\infty |\alpha_n^{(L+1)}(t)-\alpha_n^{(L)}(t)|^2\right )^{1/2}\lesssim |\alpha_{L+1,0}| +
\frac{1}{(L+1)^{\xi -1/2}}\ .$$
 Since $\xi -1/2>1$, this easily implies that $\alpha_n^{(L)}(t)\to \alpha_n(t)$ uniformly for $t\in [-\tau ,\tau ]$ as $L\to \infty $. The convergence in the right hand side of \eqref{eq:Resonant_Equation} ,
as well as the conservation laws \eqref{eq:Conserved_equatities_N_E}, follow from
\eqref{eq:Zest}.
\end{proof}
	
	%%%%%%%%%%%%%%%%%%%%%%%%%%%%%%%%%%%%%%%%%%%%%%
	%%%%%%%%%%%%%%%%%%%%%%%%%%%%%%%%%%%%%%%%%%%%%%
	%%%%%%%%%%%%%%%%%%%%%%%%%%%%%%%%%%%%%%%%%%%%%%
	%%%%%%%%%%%%%%%%%%%%%%%%%%%%%%%%%%%%%%%%%%%%%%
	
	\subsection{An invariant manifold}
	\label{sec:invariant_manifold}	
	The central feature of the Z-Hamiltonian systems is the existence of an invariant manifold that will allow us to construct exact solutions. This is a generalization of the ones used in Refs.~\cite{Biasi} ($\parm=2$), and \cite{GG, Xu, BE} ($\parm=1$):
	\beq
		\alpha_0(t) = b(t), \qquad \alpha_{n\geq 1}(t) = f_n c(t) p(t)^{n-1},
	\label{eq:invariant_manifold}
	\eeq
	with $b,c,p \in\mathbb{C}$ being three unknowns, and $f_n$ is the same time-independent sequence as in (\ref{eq:C_nmij_equation}). 
	To guarantee finite values for the conserved quantities $N$ and $E$ in (\ref{eq:Conserved_equatities_N_E}), we work with initial conditions with $|\alpha_n|$ exponentially suppressed for large $n$. It requires that
	\beq
	|p|^2 < x_c := \frac{(\parm-1)^{\parm-1}}{\parm^\parm},
	\label{eq:xc_value}
	\eeq
	 such that the exponential growth of $f_n$ is compensated ($f_{n\gg 1} \sim x_c^{-n/2} n^{-3/4}$). For conditions of this form, the infinite-dimensional system (\ref{eq:Resonant_Equation}) is reduced to three equations
	\begin{align}
		&i \dot{p} = p \left((\parm-1) N+\frac{\bar{b} c \bar{p}}{x} + \frac{(\parm-1) F(x) }{x} b \bar{c} p-(\parm-2) \frac{1-(\parm-1) F(x)}{F(x)+1}E\right), \label{eq:pdot}\\
		&i \dot{b} = b \left(N+\frac{\parm E}{F(x)+1}\right)+\frac{\parm E F(x) }{(F(x)+1) x}c \bar{p}, \label{eq:bdot}\\
		&i \dot{c} = c \left(\parm E+(\parm+1) N\right) + \frac{(\parm-1) F(x)-2}{F(x)+1}\parm E(c-b p), \label{eq:cdot}
	\end{align}
	where the dot represents the time-derivative, and we have defined $x:=|p|^2$ and the generating function
	\beq
		F(x) = \sum_{n=1}^{\infty}f_n^2 x^n.
		\label{eq:generating_function_F}
	\eeq
	See Appendix~\ref{appendix:Function_F} for details. 
	
	Contrary to Refs.~\cite{Biasi} ($\parm=2$) and \cite{GG,BE,Xu} ($\parm=1$), we do not have an explicit expression for the generating function $F(x)$ for all $\parm\geq 1$. We use instead its differential equation and the expression $x(F)$:
	\beq
	x F' = \frac{F(1+F)}{1 - (\parm-1) F}, \qquad \text{and} \qquad x = \frac{F}{(1+F)^{\parm}},
	\label{eq:Differential_eq_F}
	\eeq
	where $F' = dF/dx$. They allow us to work with $F$ instead of $x$ for most of the analysis. Recalling the admissible range of values for $x\in[0,x_c)$ explained in (\ref{eq:xc_value}), one can see that $F(x)$ and $F'(x)$ grow in that interval but only the derivative diverges, as long as $\parm>1$,
	\beq
	F(0)=0,\quad F'(0)=1, \quad F_c:=F(x_c)= \frac{1}{\parm-1}, \quad \text{and} \quad F'(x) \underset{x\to x_c}{\to} \infty.
	\label{eq:F_particular_values}
	\eeq
	Therefore, we have the admissible range of values $F\in[0,F_c)$. The upper edge $F_c$ will be particularly important for the solutions that undergo condensation and energy cascades. As we shall see, these phenomena are associated with the approach of $F$ to $F_c$.
	
	%%%%%%%%%%%%%%%%%%%%%%%%%%%%%%%%%%%%%%%%%%%%%%%%%%%%%%%%%%%
	%%%%%%%%%%%%%%%%%%%%%%%%%%%%%%%%%%%%%%%%%%%%%%%%%%%%%%%%%%%
	
	\subsection{General strategy to obtain explicit solutions}
	\label{subsec:GeneralStrategy}
	
	The strategy to obtain solutions $\alpha_n(t)$ with initial conditions in the invariant manifold (\ref{eq:invariant_manifold}) is exposed here. We begin by noting that the modulus $|b(t)|^2$ and $|c(t)|^2$ can be written in terms of the conserved quantities $N$ and $E$ given in (\ref{eq:Conserved_equatities_N_E}) and the generating function $F(t)$ in (\ref{eq:generating_function_F}), which is actually $F(x(t))$ making an abuse of notation,
	\beq
		 |b(t)|^2 = N - \frac{E}{F(t)+1} \left(1-\frac{F(t)}{F_c}\right), \qquad \text{and} \qquad |c(t)|^2 = \frac{E}{\left(F(t)+1\right)^{\parm+1}} \left(1-\frac{F(t)}{F_c}\right).
	\label{eq:b_c_in_terms_of_x_N_E}
	\eeq 
	In combination with $x(F)$ given in (\ref{eq:Differential_eq_F}) one immediately obtains the behavior for $|\alpha_0(t)|^2 = |b(t)|^2$ and $|\alpha_{n\geq 1}(t)|^2 = f_n^2 |c(t)|^2 x(t)^{n-1}$ by just studying the evolution of $F(t)$. To do so, we derive the following equation for that function:
	\beq
		\left(\frac{dF}{dt}\right)^2 + \Veff(F) = 0,
		\label{eq:Fdot}
	\eeq
	which has the form of zero-energy trajectories of a point particle moving in the potential
	\beq
		\Veff(F) = \left(\frac{F+1}{2 \parm}\right)^2 \left(C_0+C_1 F+C_2 F^2\right)
	\label{eq:Veff}
	\eeq
	with coefficients
	\beq
		C_0 = (2 \parm E-2 \parm N+S)^2, \quad C_2 = (2 (\parm-1) \parm E+S)^2, 
		\label{eq:Coefficients_V_1}
	\eeq
	\beq
		C_1 = -8 (\parm-1) \parm^2 E^2+4 \parm^2 E (2 (\parm-1) N-S)+2 S (S-2 \parm N).
		\label{eq:Coefficients_V_2}
	\eeq
	where 
		\beq
		S := \frac{2 \parm}{F+1} \left(N + \frac{F-1}{F+1} (1-(\parm-1) F)E+(F+1)^\parm(b p \bar{c} + \bar{b} \bar{p} c)\right).
		\label{eq:S_quantity}
	\eeq
	The quantity $S$ is conserved because it comes from rewriting the Hamiltonian as $\mathcal{H} = \frac{1}{2} \left(N^2 + E S\right)$.
The equation for $\dot{F}$ comes after using the expression for $\dot{p}$ in (\ref{eq:pdot}) to write an equation for $\dot{x}$, and combine it with the conserved quantities in (\ref{eq:b_c_in_terms_of_x_N_E}) and the differential equation for $F$ in (\ref{eq:Differential_eq_F}).

	From the previous derivations, we get that the motion of $|\alpha_n(t)|$ is given by the shape of the potential $V(F)$, which is determined by the conserved quantities. As Fig.~\ref{fig:shapes_V} shows, there are only three kinds of motions: time-periodic solutions, stationary solutions, and condensation processes. It follows from the evaluation of the potential at the edges: $V(0)\geq 0$ and $V(F_c)\geq 0$. 
	The general strategy to obtain explicit solutions $\alpha_n(t)$ for these motions is the following. We first fix the initial conditions, determining the values for the conserved quantities $N$, $E$, and $S$ and with that the shape of $V(F)$. We then integrate (\ref{eq:Fdot}) to obtain a solution $F(t)$. With that, we have explicit expressions for $x(t)$, $|b(t)|$, and $|c(t)|$ through (\ref{eq:Differential_eq_F}) and (\ref{eq:b_c_in_terms_of_x_N_E}). It immediately gives the expression for the amplitude spectrum $|\alpha_{n}(t)|^2$. Finally, for the phases of $\alpha_n(t)$, we plug the decomposition $b(t) = |b(t)| e^{i \phi_b(t)}$, $c(t) = |c(t)| e^{i \phi_c(t)}$, and $p(t) = |p(t)| e^{i \phi_p(t)}$ into the equations for $\dot{b}$, $\dot{c}$, and $\dot{p}$ (\ref{eq:pdot})-(\ref{eq:cdot}) and use (\ref{eq:S_quantity}) to write $(b p \bar{c} + \bar{b} \bar{p} c)$ in terms of $F(t)$ and the conserved quantities. It gives the equations for the phases $\phi_{b,c,p}$ with a right-hand side that only depends on $t$ and can be integrated. Gathering all these elements, one obtains a solution $\alpha_n(t)$.

		\begin{figure}[h!]
		\centering
			\includegraphics[width = 5.4cm]{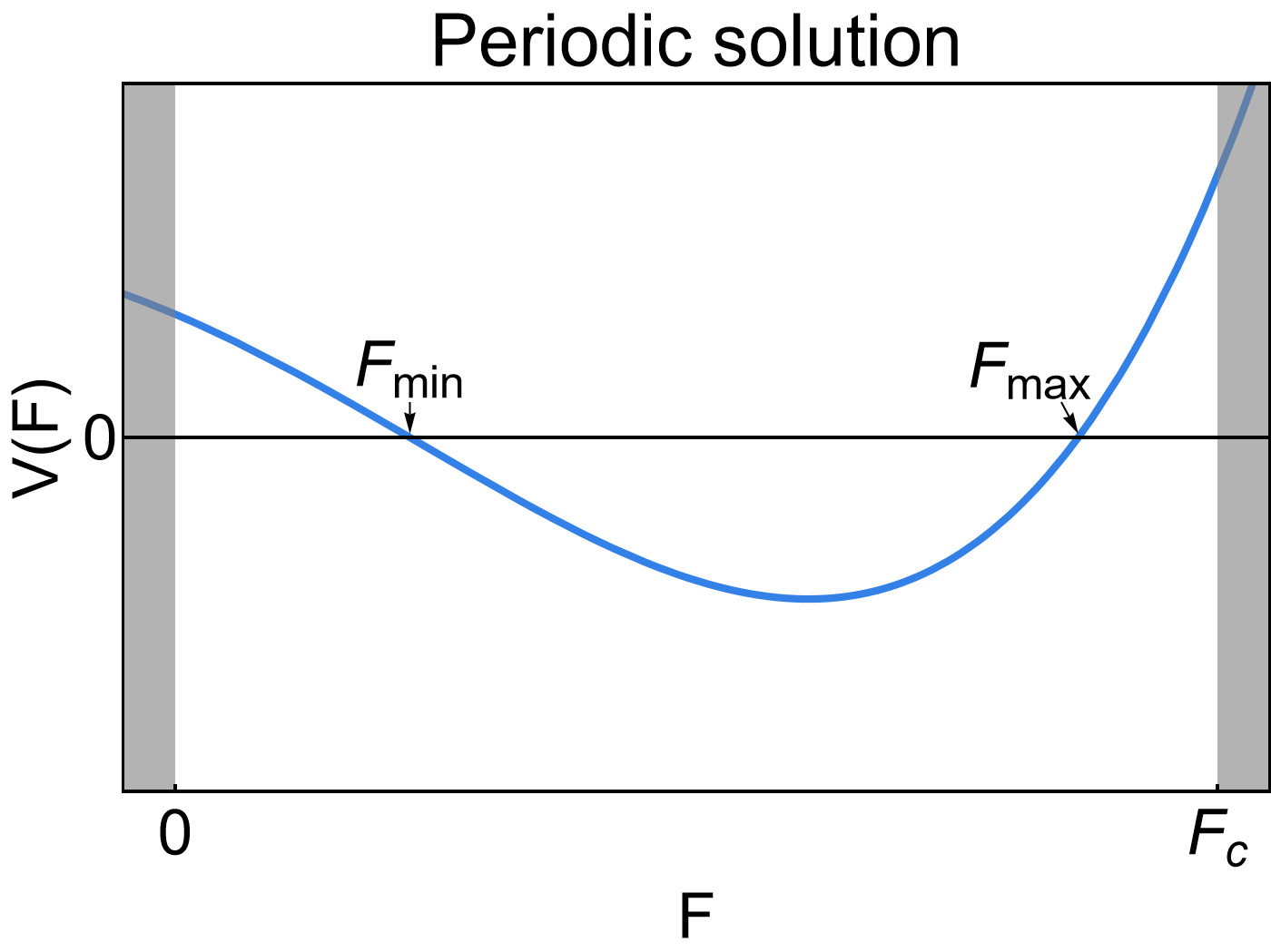}
			\includegraphics[width = 5.4cm]{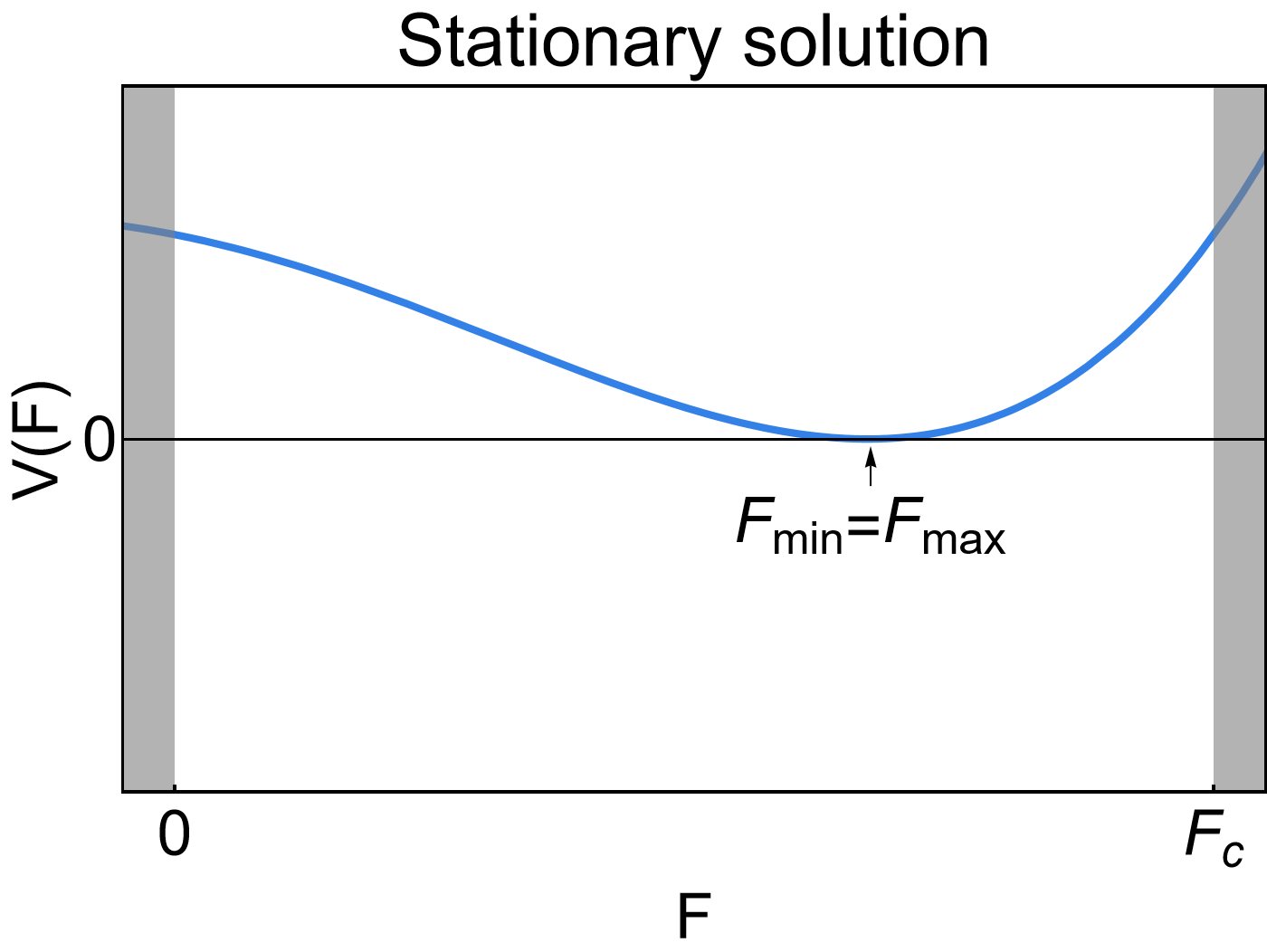}
			\includegraphics[width = 5.4cm]{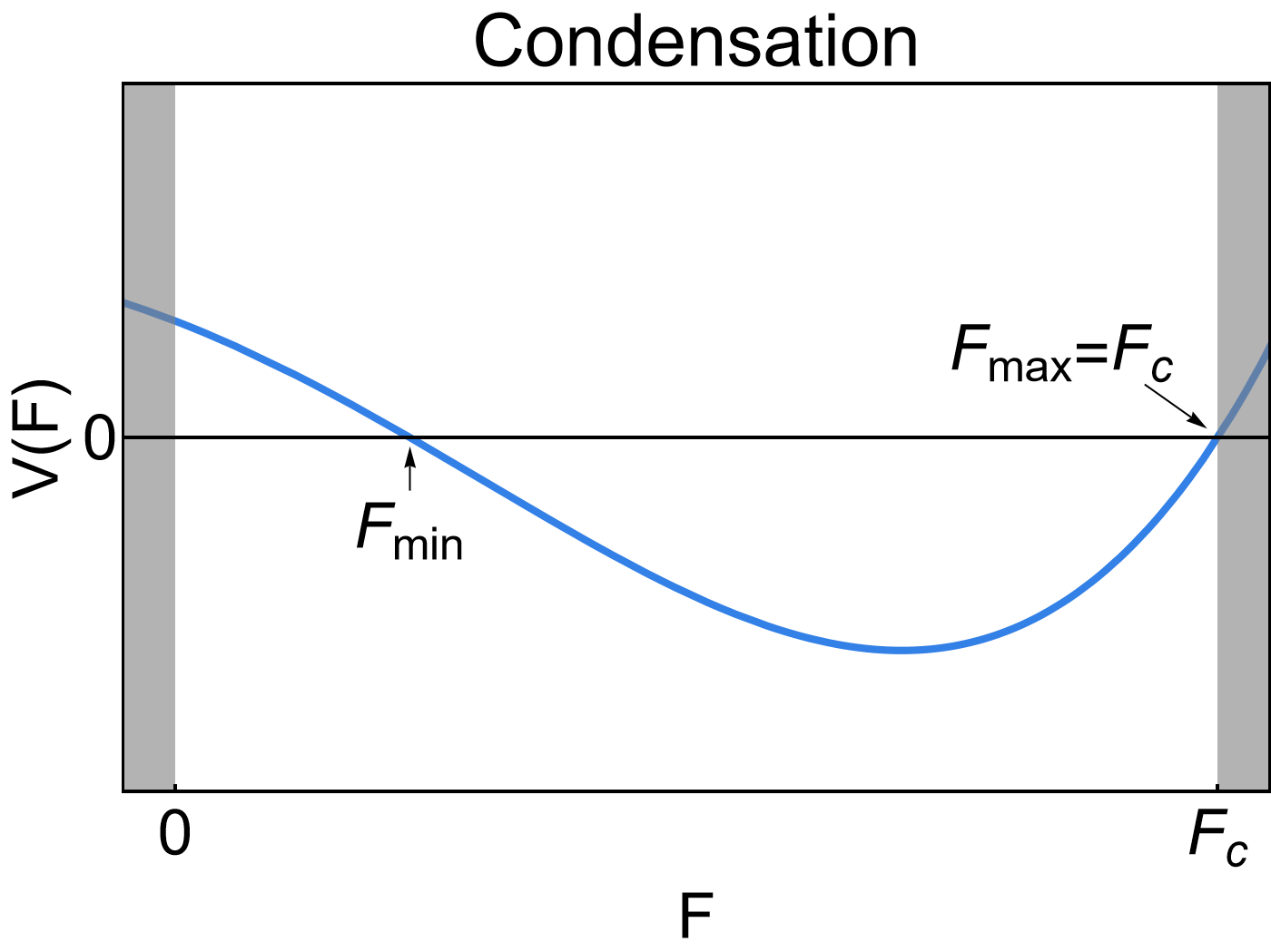}
			\caption{\small Shapes of $V(F)$ for the motions present in the invariant manifold (\ref{eq:invariant_manifold}). The shaded areas represent values of $F$ out of $[0,F_c)$ (i.e., $x$ out of $[0,x_c)$). $F_{\text{min}}$ ($F_{\text{max}}$) represents the minimum (maximum) value of $F(t)$.}
		\label{fig:shapes_V}
	\end{figure}
	
	%%%%%%%%%%%%%%%%%%%%%%%%%%%%%%%%%%%%%%%%%%%%%%
	%%%%%%%%%%%%%%%%%%%%%%%%%%%%%%%%%%%%%%%%%%%%%%
	%%%%%%%%%%%%%%%%%%%%%%%%%%%%%%%%%%%%%%%%%%%%%%
	%%%%%%%%%%%%%%%%%%%%%%%%%%%%%%%%%%%%%%%%%%%%%%
	
	\subsection{Coherent condensation and dual-cascade behavior}
	\label{sec:coherent_condensation}

	We now construct explicit solutions representing $-3/2$-cascades (section~\ref{subsec:exact_solutions_condensation}). As we shall see, they accompany the dynamical formation of a condensate - i.e., a process where the amplitude spectrum converges to the Kronecker delta distribution: $|\alpha_n|^2 \to N \delta_{0,n}$. A dual-cascade behavior and the blow-up of Sobolev norms will be discussed in section~\ref{sec:Dual cascade behavior and blow-up of Sobolev norms}, while the formation of large- and small-scale structures in position space in section~\ref{sec:Position_space}. 
	
	%%%%%%%%%%%%%%%%%%%%%%%%%%%%%%%%%%%%%%%%%%%%%%%%
	%%%%%%%%%%%%%%%%%%%%%%%%%%%%%%%%%%%%%%%%%%%%%%%%
	
	\subsubsection{Explicit solutions representing the formation of condensates}
	\label{subsec:exact_solutions_condensation}
	
	 Recalling that energy cascades require the spectrum to develop a power law: $|\alpha_{n\gg1}|^2 \sim n^{\gamma}$, they occur within the invariant manifold when $x(t)$ approaches the critical value $x_c$, as directly extracted from (\ref{eq:invariant_manifold}). Namely, when $x(t)$ approaches the upper edge of its admissible values. In terms of the variable $F$, it happens when $F(t)\to F_c$, with that value given in (\ref{eq:F_particular_values}). Consequently, all energy cascades in the manifold are part of a condensation process: $|\alpha_n|^2\to N\delta_{0,n}$, as the amplitude $|c|^2$ vanishes for any $E>0$ when $F\to F_c$, as extracted from (\ref{eq:b_c_in_terms_of_x_N_E}). Then, the problem of finding solutions that exhibit either energy cascades or condensation is the same. It reduces to finding trajectories $F(t)$ in Eq.~(\ref{eq:Fdot}) that reach the upper edge $F_c$. A standard inspection of the potential $\Veff(F)$ in that equation tells us that the desired trajectories are given by the initial conditions satisfying the relation
	\beq
	S = 2 (\parm-1) N.
	\label{eq:blow-up condition}
	\eeq
	This is a necessary and sufficient condition for the condensation to take place, excluding the trivial case $N=0$. The necessity comes from the value $\Veff(F_c)$, which only vanishes when the relation (\ref{eq:blow-up condition}) is satisfied, otherwise the potential is positive at that point and $F_c$ is never reached; see Fig.~\ref{fig:shapes_V}a-b. The sufficiency follows from the equation for $\dot{F}$ presented later in (\ref{eq:Fdot_condensation}) after substituting $S = 2 (\parm-1) N$ in $V(F)$. The condensation condition (\ref{eq:blow-up condition}) can be expressed in terms of our unknowns in the form 
	\beq
		\left|\frac{\parm F }{x} c\bar{p} + \left(1-\frac{F}{F_c}\right) \ b\right|^2 = \frac{ F (F+1)^2}{x F_c} |c|^2,
	\eeq 
	which is satisfied by
	\beq
	c = b p \frac{-\parm F+(F+1) e^{i \lambda } \sqrt{F/F_c}}{F (F-(\parm-1))}, \quad \text{with} \quad \lambda \in [0,2\pi).
	\label{eq:blow_up_initial_conditions}
	\eeq
	This is the family of initial conditions that undergo condensation. The value $F=\parm-1$ is not problematic, as it just indicates that $b(t)$ may vanish at a given time while $c(t)$ remains finite when $\parm\in(1,2)$. This family of initial conditions can be written in terms of $N$, $E$, and $\parm$ as follows (up to symmetry transformations and time-translation):
	\begin{align}
		& p(0) = \frac{(\parm E - N) }{\sqrt{\parm-1} (N + \parm E)} \left(\frac{(\parm-1)(N+\parm E)^2}{(N-\parm E)^2 + (\parm-1)(N+\parm E)^2}\right)^{\parm/2}, \label{eq:p_Initial_Data_Z-Hamiltonian}\\
		& b(0) = (N + (\parm-2)E) \left(\frac{\parm N}{(N-\parm E)^2 + (\parm-1)(N+\parm E)^2}\right)^{1/2}, \label{eq:b_Initial_Data_Z-Hamiltonian}\\
		& c(0) = 2 E (\parm N)^{1/2} (N  + \parm E)^\parm \left(\frac{\parm-1}{(N-\parm E)^2 + (\parm-1)(N+\parm E)^2}\right)^{(\parm+1)/2}. \label{eq:c_Initial_Data_Z-Hamiltonian}
	\end{align}
	These expressions guarantee that the quantities $N$ and $E$ take the values here inserted and the condensation condition $S=2(\parm-1)N$ is satisfied. With this, we prove that condensation occurs in the Z-Hamiltonian systems for any value of the ``energy per particle", $E/N$, thereby demonstrating the absence of a critical energy.

	To obtain explicit solutions for the evolution of these initial conditions, we combine the equation for $\dot{F}(t)$ in (\ref{eq:Fdot}) with the condition for condensation (\ref{eq:blow-up condition}), yielding
	\beq
	\dot{F} = 2\Omega (F+1) \sqrt{\frac{(F-F_0)}{\parm(F_0+1)} \left(1-\frac{F}{F_c}\right)}, 
	\label{eq:Fdot_condensation}
	\eeq
	with 
	\beq
	F_0 = \frac{(N-\parm
		E)^2}{(N + \parm E)^2}F_c, \qquad \text{and} \qquad \Omega = \sqrt{\frac{(N-\parm E)^2 + (\parm-1) (N+\parm E)^2}{4\parm}}. 
	\label{eq:F0_Omega}
	\eeq
	This equation is solved by
	\beq
	F(t) = \frac{\parm F_0 + \left(1-\frac{F_0}{F_c}\right) \sin ^2(\Omega t)}{\parm-\left(1-\frac{F_0}{F_c}\right) \sin ^2( \Omega t )},
	\label{eq:solution_F(t)}
	\eeq
	where we used that $F_c = 1/(\parm-1)$. We now follow the strategy to obtain solutions $\alpha_n(t)$ explained in section~\ref{subsec:GeneralStrategy}. The expressions for $x(t)$, $|b(t)|$, and $|c(t)|$ follow from (\ref{eq:Differential_eq_F}) and (\ref{eq:b_c_in_terms_of_x_N_E}), while the explicit formula for the evolution of $|\alpha_n(t)|^2$ comes from (\ref{eq:invariant_manifold}). As $t$ approaches $T = \frac{\pi}{2\Omega}$, one can see that $F(t)\to F_c$, $x(t)\to x_c$, $|b(t)|^2\to N$, and $|c(t)|^2\to 0$, indicating that the system condenses to the lowest mode in finite time:
	\beq
	|\alpha_n|^2 \underset{t\to T}{\to} N \delta_{n,0}.
	\label{eq:Kronecker-Delta}
	\eeq
	The expressions for the phases of $b$, $c$, and $p$ follow from the integration explained in section~\ref{subsec:GeneralStrategy}. For 	$N\neq \parm E$ and $N\neq (2-\parm) E$, they take the form
	\beq
	\phi_b(t) = \phi_b(0) -\frac{1}{2} (\parm E+N) t -\arctan\left(\frac{2\Omega}{(\parm-2) E+N}\tan (\Omega t )\right),
	\label{eq:phase_b}
	\eeq
	\beq
	\phi_c(t) = \frac{1}{2} ((\parm-1) \parm E-(\parm+1) N) t -\parm \arctan\left(\frac{2\Omega}{\parm E+N}\tan (\Omega t)\right),
	\label{eq:phase_c}
	\eeq
	\beq
	\phi_p(t) = \phi_p(0) + \frac{1}{2} \parm (\parm E-N) t + \arctan\left(\frac{2\Omega}{N-\parm E} \tan (\Omega t)\right)-(\parm-1) \arctan\left(\frac{2\Omega}{\parm
		E+N} \tan (\Omega t)\right),
	\label{eq:phase_p}
	\eeq
	with $\phi_p(0) = 0$ if $N<\parm E$ and $\phi_p(0) = -\pi$ if $N>\parm E$,  while $\phi_b(0) = -\pi$ if $N<(2-\parm) E$ and $\phi_b(0) = 0$ if $N>(2-\parm) E$. Other initial phases are achieved via the symmetry transformation $\alpha_n(t)\to e^{i(\lambda_1+\lambda_2 n)} \alpha_n(t)$. For $N=\parm E$ we have $F_0=0$, corresponding to the two-mode initial data: $\alpha_0(0)=b(0)$, $\alpha_{1}(0) = c(0)$, and $\alpha_{n \geq 2}(0) = 0$. In that case, the phases take the form:
	\beq
	\phi_b(t) = - \parm E t -\arctan\left(\sqrt{\frac{\parm}{\parm-1}}\tan (\Omega t )\right),
	\label{eq:phase_b_two_mode}
	\eeq
	\beq
	\phi_c(t) =  - \parm E t - \parm \arctan\left(\sqrt{\frac{\parm-1}{\parm}} \tan(\Omega t)\right),
	\eeq
	\beq
	\phi_p(t) = - \frac{\pi}{2} - (\parm-1) \arctan\left(\sqrt{\frac{\parm-1}{\parm}} \tan(\Omega t)\right).
	\eeq	
	On the other hand, $N= (2-\parm) E$ means  that the lowest mode is initially zero, $b(0)=0$, as seen in Eq.~(\ref{eq:p_Initial_Data_Z-Hamiltonian}). It leads to the phases
	\beq
	\phi_b(t) = -\frac{\pi}{2} - E t,
	\label{eq:phase_b_b=0}
	\eeq
	\beq
	\phi_c(t) =  (\parm^2 - \parm - 1) E t - \parm \arctan\left(\sqrt{\parm-1} \tan (\Omega t )\right),
	\eeq
	\beq
	\phi_p(t) =  \parm (\parm-1) E t - \arctan\left(\frac{1}{\sqrt{\parm-1}} \tan (\Omega t )\right)  - (\parm-1) \arctan\left(\sqrt{\parm-1} \tan (\Omega t )\right).
	\eeq
	The phase of $\alpha_n(t)$ immediately follows from the combination of these expressions according to the invariant manifold (\ref{eq:invariant_manifold}). The ones for $n\geq 1$ are in a straight line, indicating that our solutions represent a highly coherent condensation process.

	%%%%%%%%%%%%%%%%%%%%%%%%%%%%%%%%%%%%%%%%%%%%%%
	%%%%%%%%%%%%%%%%%%%%%%%%%%%%%%%%%%%%%%%%%%%%%%
	%%%%%%%%%%%%%%%%%%%%%%%%%%%%%%%%%%%%%%%%%%%%%%
	%%%%%%%%%%%%%%%%%%%%%%%%%%%%%%%%%%%%%%%%%%%%%%

	\subsubsection{Dual cascade behavior and blow-up of Sobolev norms}
	\label{sec:Dual cascade behavior and blow-up of Sobolev norms}
	
	We here explain that the condensation process uncovered in the previous section is accompanied by a dual cascade behavior, which leads to the separation of the quantities $N$ and $E$ in the spectrum. It is useful to recall their expressions for this discussion: $N=\sum |\alpha_n|^2$ and $E=\sum n|\alpha_n|^2$. The inverse transfer of $N$ is easily understood from the convergence of the amplitude spectrum $|\alpha_n|$ to the Kronecker delta distribution in (\ref{eq:Kronecker-Delta}). Initially, $N$ is stored at different modes since $E\neq 0$, but as the system approaches the formation of the condensate $|\alpha_{n\geq 1}|$ decays to zero while $|\alpha_0|$ grows until it stores the total amount of $N$. At the same time, the fact that $|\alpha_{n\geq1}|$ decays to zero would compromise energy conservation. This is however compensated by the  transfer of energy to high modes through a $-3/2$-cascade:
	\beq
		|\alpha_{n\gg 1}|^2\sim |c|^2 n^{-3/2} \left(\frac{x}{x_c}\right)^n \sim (T-t)^2 n^{-3/2} \left(1- C (T-t)^4\right)^n,
		\label{eq:power-law}
	\eeq
	where constant factors have been omitted and $C$ depends on $\parm$, $E$, and $N$. The conservation of $E$ is precisely in the combination of $|c|^2$ decaying to zero and the series in $E$ growing to infinity due to the development of the power law  $n^{-3/2}$. Then, the quantity $E$ is supported by higher and higher modes as $t$ approaches $T$. This behavior is quantified by the blow-up of the Sobolev norms $\xi>1/2$ at time $T$, 
	\beq
	H^{\xi>1/2} = \left(\sum_{n=0}^{\infty}(n+1)^{2\xi}|\alpha_n|^2\right)^{\frac{1}{2}} \underset{t\sim T}{\sim} \tilde{C} (T-t)^{2(1-2\xi)},
	\label{eq:blow-up_Sobolev_norms}
	\eeq
	where $\tilde{C}$ is a constant depending on $E$ and $N$. The norm $H^{1/2} = \sqrt{N+E}$ cannot blow up due to the conservation of $E$ and $N$. 
	
	The expressions in (\ref{eq:power-law}) and (\ref{eq:blow-up_Sobolev_norms}) have been calculated as follows. We first wrote the amplitude spectrum in terms of the invariant manifold $|\alpha_{n\geq 1}|^2 = f_n^2 |c|^2 x^{n-1}$. Then, we used the asymptotic behavior of function $f_{n\gg 1}^2 \sim x_c^{-n} n^{-3/2}$ given in (\ref{eq:C_nmij_equation}) to calculate the leading term of the series in (\ref{eq:blow-up_Sobolev_norms}). After, we used the expressions for $|c(t)|^2$ and $x(t)$ that come from combining (\ref{eq:b_c_in_terms_of_x_N_E}), (\ref{eq:Differential_eq_F}), and $F(t)$ in (\ref{eq:solution_F(t)}). Expanding these expressions close to $T$, we obtained $|c(t)|^2\sim (T-t)^2$ and $1-x(t)/x_c\sim (T-t)^4$ up to constant factors. Bringing all together, we obtained (\ref{eq:power-law}) and (\ref{eq:blow-up_Sobolev_norms}).

	%%%%%%%%%%%%%%%%%%%%%%%%%%%%%%%%%%%%%%%%%%%%%%
	%%%%%%%%%%%%%%%%%%%%%%%%%%%%%%%%%%%%%%%%%%%%%%
	%%%%%%%%%%%%%%%%%%%%%%%%%%%%%%%%%%%%%%%%%%%%%%
	%%%%%%%%%%%%%%%%%%%%%%%%%%%%%%%%%%%%%%%%%%%%%%

		\subsubsection{Structure formation in position space}
	\label{sec:Position_space}
		
	We here show that the condensation process we are studying results in the formation of two structures of completely different scales in position space, such as Fig.~\ref{fig:All_Type_1} illustrated previously. We will see that this is associated with the aforementioned dual cascade that led to the separation of the quantities $N$ and $E$ in the spectrum. To do so, we use the same generating function as in \cite{GG,BE,Xu} ($\parm=1$) and \cite{Biasi} ($\parm=2$):
	\beq
	u(t,\theta) = \sum_{n=0}^{\infty} \alpha_n(t) e^{i n \theta} \qquad \text{with} \qquad \theta \in [0,2\pi).
	\label{eq:u_generating_function}
	\eeq
It maps the behavior of $\alpha_n(t)$ into the evolution of a spatial profile in a one-dimensional periodic box (the circle). As we show below,  for conditions that undergo condensation, $|u(t,\theta)|^2$ progressively turns into a flat profile (a condensate) as $t$ approaches $T$, and ends up filling the domain: $|u(t,\theta)|^2\sim N$. However, a layer structure (a ``spike") is developed around $\theta  = -\phi_p(t)$. It keeps a finite amplitude: $$|u(t\to T,\theta=-\phi_p(T))|^2 \to N + \sqrt{2/\pi} E \Gamma(1/4)^2\ ,$$ but its width progressively shrinks to zero. This concentration process follows a self-similar profile when $|\theta+\phi_p(t)|\ll1$ and $(T-t)\ll 1$:
	\beq
		|u(t,\theta)|^2 \sim \left|\sqrt{N} + i \left(\frac{2}{\pi}\right)^{1/4} \Gamma(1/4) \sqrt{E} \left(1 - i C\ \frac{\theta +\phi_p(t)}{(T-t)^4}\right)^{-1/4} \right|^2.
		\label{eq:u_mod}
	\eeq
	where $\phi_p(t)$ is the phase of $p(t)$, $C=4/\left(\parm (\parm-1) E^2 N^2\right)$, and $\Gamma$ is the gamma function. 
	
	Note the dramatic separation in the scales of the structures present in position space. One covers the whole domain and the other narrows to a point. The flat profile $|u(t,\theta)|^2\sim N$ is associated with the convergence of the amplitude spectrum $\alpha_n(t)$ to the Kronecker delta distribution $|\alpha_n(t)|^2\to N \delta_{n0}$. Only the lowest mode, the $\theta$-independent one, remains, storing the total amount of $N$. However, the systems cannot fully  develop the flat profile since it has $E=0$ while the initial energy, $E\neq 0$, must be conserved. The formation of a narrowing spike (i.e., the concentration of energy into an arbitrarily small region) is the mechanism by which our systems conserve energy as a condensate forms. This is caused by the migration of the energy towards modes with large $n$, which are represented by highly oscillatory Fourier modes in position space. The strong coherence of the condensation process we are studying (the phases of $\alpha_{n\geq 1}$ are in a straight line (\ref{eq:invariant_manifold})) results in an organized superposition of Fourier modes to form a coherent structure at small scales. A disorganized relation between the phases would result instead in a background of highly oscillatory small-amplitude fluctuations, such as observed in strongly out-of-equilibrium regimes of nonlinear waves \cite{Josserand}.
	
	The previous description follows from the calculation of (\ref{eq:u_mod}). The key element is the convergence to the Kronecker delta distribution $|\alpha_n(t)|^2\to N\delta_{n0}$, the development of the power-law spectrum $|\alpha_{n\gg1}(t)|^2/|c(t)| \sim n^{-3/2}$ with $|c(t)|^2$ decaying to zero, and the  alignment in the phases. In other words, the transfer of $N$ to the lowest mode, the transfer of $E$ to arbitrarily high modes, and the strong coherence of the process. First, one writes $u(t,\theta)$ in terms of the invariant manifold (\ref{eq:invariant_manifold})
	\beq
	u(t,\theta) = b(t) + \frac{c(t)}{p(t)} \sum_{n=1}^{\infty} f_n \left(p(t) e^{i \theta}\right)^n.
	\label{eq:u_XX}
	\eeq
	The decomposition of $b(t)$, $c(t)$, and $p(t)$ in phase and amplitude yields
	\beq
	|u(t,\theta)|^2 = \left||b(t)| + e^{-i(\phi_b(t)-\phi_c(t)+\phi_p(t))}\frac{|c(t)|}{x(t)^{1/2}} \sum_{n=1}^{\infty} f_n \left(x(t)^{1/2} e^{i (\theta+\phi_p(t))}\right)^n\right|^2.
	\label{eq:XX3}
	\eeq
	Using the asymptotic behavior $f_{n\gg 1} \sim x_c^{-n/2} \left(\parm/(2\pi (\parm-1)^3)\right)^{1/4} n^{-3/4}$ and the following behaviors when $(T-t)\ll1$ (coming from the explicit expressions in Sec.~\ref{subsec:exact_solutions_condensation}):
	\begin{eqnarray}
		&|b(t)|^2\sim N
		% -(\parm-1)E^2 N (T-t)^2
		, \quad & \left(1-\frac{x(t)}{x_c}\right) \sim \frac{1}{2} \parm (\parm-1) E^2 N^2 (T-t)^{4},\\
		&|c(t)|^2 \sim (\parm-1)^2  x_c E^2 N (T-t)^2, \quad  & \phi_b(t)-\phi_c(t)+\phi_p(t)  \sim - \frac{\pi}{2}
		%- \frac{1}{2} (N + (2 - \parm) E) (T - t)}
		,
	\end{eqnarray}
	one finds that the relevant contribution to the series in Eq.~(\ref{eq:u_mod}) is
	\beq
	(T-t)\sum_{n=1}^{\infty} n^{-3/4} \left(\frac{x(t)}{x_c}\right)^{n/2} e^{i n(\theta+\phi_p(t))}.
	\label{eq:series_u}
	\eeq
	The leading term in $f_n$ is the only one that contributes because induces a divergence in the series which is compensated by the factor $(T-t)$ to give a finite value; on the other hand, subleading terms vanish. The series in (\ref{eq:series_u}) is a particular case of the polylogarithm function and only diverges when $x\to  x_c$ and the complex exponential goes to one. Then,  for $\theta\neq-\phi_p(t)$ this term decays to zero as $t$ approaches $T$, indicating that $|u(t,\theta)|^2$ goes to the flat profile $N$. However, for $|\theta+\phi_p(t)|\ll 1$ there is a competition with $(T-t)\ll 1$  of the form:
	\beq
	(T-t)\sum_{n=1}^{\infty} n^{-3/4} \left(\frac{x(t)}{x_c}\right)^{n/2} e^{i n(\theta+\phi_p(t))} \sim \Gamma(1/4)\left(\frac{1}{4} \parm (\parm-1) E^2 N^2 -i \frac{\theta +\phi_p(t)}{(T-t)^4}\right)^{-1/4}.
	\eeq
	The layer structure in (\ref{eq:u_mod}) is obtained by combining this result with (\ref{eq:XX3}).

	%%%%%%%%%%%%%%%%%%%%%%%%%%%%%%%%%%%%%%%%%%%%%%
	%%%%%%%%%%%%%%%%%%%%%%%%%%%%%%%%%%%%%%%%%%%%%%
	%%%%%%%%%%%%%%%%%%%%%%%%%%%%%%%%%%%%%%%%%%%%%%
	%%%%%%%%%%%%%%%%%%%%%%%%%%%%%%%%%%%%%%%%%%%%%%

	\subsection{Stationary states}
	\label{sec:stationary_solutions}
	
	The invariant manifold (\ref{eq:invariant_manifold}) admits three families of stationary solutions, $\alpha_n(t) = \alpha_n(0) e^{-i(\lambda+n\omega)t}$:
	\beq
	 \hspace{-2.7cm}	\text{Family 1:} \hspace{0.4cm} b(t) = b(0) e^{-iN t}, \hspace{2.cm} p(t) = p(0), \hspace{2.8cm} c(t) = - \frac{b(t)p(t)}{F_0}.
		\label{eq:Stationary_Solution_Family_1}
	\eeq
	\beq
	\hspace{-3.15cm}		\text{Family 2:}  \hspace{0.4cm} b(t) = b(0)^{-i(N + \parm E) t}, \hspace{1.3cm} p(t) = p(0) e^{-i \parm N t}, \hspace{1.7cm} c(t) = b(t)p(t).
		\label{eq:Stationary_Solution_Family_2}
	\eeq
	\beq
		\text{Family 3:} \quad b(t) = b(0) e^{-i(N + (\parm-1) E) t}, \quad p(t) = p(0) e^{-i (\parm-1)(N-2E) t}, \ \ c(t) = - b(t) p(t) \frac{1-(\parm-1)F_0}{\parm F_0},
		\label{eq:Stationary_Solution_Family_3}
	\eeq
	where $F_0 = F(|p(0)|^2)$. Note that families 1 and 3 converge to the same one in the limit $\parm\to 1$, which is still a family of stationary solutions for the cubic Szeg\H{o} equation ($\parm=1$) \cite{GG}.
	
	To obtain the previous expressions we note that stationary solutions are associated with equilibrium points $V(F)=V'(F)=0$ in the potential (\ref{eq:Fdot}), such as Fig.~\ref{fig:shapes_V} illustrated earlier. These points  are given  by the following  values of  $S$:
	\beq
	S_1 = 0, \qquad S_2= 2\parm N, \qquad \text{and} \qquad S_3 = -2 (\parm-1) (E - N),
	\label{eq:Stationary_solutions_S_conditions}
	\eeq 
	which determine the previous families of stationary solutions following the labels $1$, $2$, and $3$. The expressions in Eqs.~(\ref{eq:Stationary_Solution_Family_1})-(\ref{eq:Stationary_Solution_Family_3}) come after writing Eq.~(\ref{eq:Stationary_solutions_S_conditions}) in  terms of $b$, $c$, $p$, and $F$. The frequencies come from equations (\ref{eq:pdot})-(\ref{eq:cdot}).	

	%%%%%%%%%%%%%%%%%%%%%%%%%%%%%%%%%%%%%%%%%%%%%%%%%%
	%%%%%%%%%%%%%%%%%%%%%%%%%%%%%%%%%%%%%%%%%%%%%%%%%%
	%%%%%%%%%%%%%%%%%%%%%%%%%%%%%%%%%%%%%%%%%%%%%%%%%%
	%%%%%%%%%%%%%%%%%%%%%%%%%%%%%%%%%%%%%%%%%%%%%%%%%%
	 	 
	\subsection{The cubic Szeg\H{o} equation ($\parm=1$) and its $\alpha$- and $\beta$-deformations.}
	\label{sec:The cubic Szego equation}
	
	We here discuss how part of the results obtained for the  Z-Hamiltonian systems with $\parm>1$ connect with the well-studied cubic Szeg\H{o} equation \cite{GG} ($\parm=1$). The latter can be written as a resonant Hamiltonian system (\ref{eq:Resonant_Equation}) given by the couplings
	\beq
		C_{nmkj} = 1.
	\label{eq:Szego_eq}
	\eeq
	This equation was originally introduced in \cite{GG} in the form of a differential equation for the generating function $u(t,\theta)$ that we used in section~\ref{sec:Position_space}. Its structure is very rich, presenting an infinite tower of multi-dimensional invariant manifolds and two Lax pair structures that make the system integrable.
	
	Interestingly, the family of Z-Hamiltonian systems (\ref{eq:Resonant_Equation})-(\ref{eq:C_nmij_equation}) includes the cubic Szeg\H{o} equation for $\parm=1$. However, we do not know yet if the members with $\parm>1$ enjoy similar structures of invariant manifolds and/or Lax pairs. For the moment, we know that they retain the invariant manifold presented in (\ref{eq:invariant_manifold}). However, the dynamics in that manifold are richer for $\parm>1$ than for the cubic Szeg\H{o} equation. For the latter, $\parm=1$, the manifold contains stationary and time-periodic solutions, while for $\parm>1$ it includes the phenomenon of coherent condensation and dual cascades in addition to the other two. An inspection of the role played by parameter $s$ in our developments reveals that two families of stationary solutions ($S = 0$ and $S = -2 (E - N) (\parm-1)$) and the family of condensation processes ($S=2(\parm-1)N$) bifurcate from the same family of stationary solutions ($S=0$) of the cubic-Szeg\H{o} equation, $\parm=1$.

	In addition to the cubic Szeg\H{o} equation, there exist other Hamiltonian systems that enjoy similar analytic structures. They have a Lax pair and an infinite tower of multi-dimensional invariant manifolds. These systems are called the $\beta$-Szeg\H{o} equation \cite{BE} and the $\alpha$-Szeg\H{o} equation \cite{Xu} for their close connection to the previous model. The $\beta$-family is given by: $C_{nmkj}^{(\beta)} = 1$ when $nmkj = 0$ and $C_{nmkj}^{(\beta)}= 1-\beta $ when $nmkj=0$, with $\beta \in \mathbb{R}$, while the $\alpha$-family acts on the linear part of the Hamiltonian system in (\ref{eq:Resonant_Equation}). Both models enjoy the invariant manifold we use in (\ref{eq:invariant_manifold}) (with $\parm = 1$), and they display $0$-cascades that take an infinite time to develop a power-law spectrum. Interestingly, these extensions of the cubic-Szeg\H{o} equation can be incorporated into our family of Z-Hamiltonians retaining the invariant manifold. For instance, we can apply the $\beta$-deformation:  $C_{nmkj}^{(\parm,\beta)} =	C_{nmkj}^{(\parm)}$ when $nmkj = 0$ and	$C_{nmkj}^{(\parm,\beta)} =(1-\beta)C_{nmkj}^{(\parm)}$ when $nmkj \neq 0$, where $C_{nmkj}^{(\parm)}$ are the coefficients for the Z-Hamiltonians presented in (\ref{eq:C_nmij_equation}). It then results in a multiparameter family of Hamiltonian systems that display both $0$- and $-3/2$-cascades depending on the parameters. Their analysis follows the same procedure exposed in previous sections.
	
	%%%%%%%%%%%%%%%%%%%%%%%%%%%%%%%%%%%%%%%%%%%%%%%%%%
	%%%%%%%%%%%%%%%%%%%%%%%%%%%%%%%%%%%%%%%%%%%%%%%%%%
	%%%%%%%%%%%%%%%%%%%%%%%%%%%%%%%%%%%%%%%%%%%%%%%%%%
	%%%%%%%%%%%%%%%%%%%%%%%%%%%%%%%%%%%%%%%%%%%%%%%%%%

		\section{The Y-Hamiltonian systems: $\mathbf{-5/2}$-cascades \& structure formation}
	\label{sec:Y_Hamiltonian_systems}

	This section presents our results on $-5/2$-cascades. They follow the same structure as the study of $-3/2$-cascades in section~\ref{sec:Z_Hamiltonian_systems}: a novel family of Hamiltonian systems is presented, it admits an invariant manifold, and some solutions represent energy cascades. We call this family the {\em Y-Hamiltonian systems} and are given by the couplings
	\beq
	C_{nmkj} = \begin{cases}
		\sqrt{\left((\parm-1)n+2\right)\left((\parm-1)m+2\right)\left((\parm-1)k+2\right)\left((\parm-1)j+2\right)} \frac{f_n f_m f_k f_j}{4 f_{n+m}^2} & \text{for} \quad n m k j = 0\\
		0 & \text{otherwise}
	\end{cases}
	\label{eq:C_nmij_equation_Y}
	\eeq
	where $f_n$ is the sequence that appears in the expression for the Z-Hamiltonians in section~\ref{sec:Z_Hamiltonian_systems}. Recall that it has the form $f_n=\sqrt{A_n^{(\parm)}}$ with $A_n^{(\parm)}$ being the Fuss-Catalan numbers given in (\ref{eq:Fuss_Catalan_numbers}). This family of systems has been constructed inspired by the {\em truncated Szeg\H{o} equation} introduced in \cite{BE}, which is precisely the element $\parm=1$. For its construction, we performed a truncation of the Z-Hamiltonian systems (turning to zero all interactions that excluded the lowest mode) and realized that the modification (\ref{eq:C_nmij_equation_Y}) kept a manifold invariant. It then led to a new family of analytically tractable Hamiltonian structures.

		%%%%%%%%%%%%%%%%%%%%%%%%%%%%%%%%%%%%%%%%%%%%%%
	%%%%%%%%%%%%%%%%%%%%%%%%%%%%%%%%%%%%%%%%%%%%%%
	%%%%%%%%%%%%%%%%%%%%%%%%%%%%%%%%%%%%%%%%%%%%%%
	%%%%%%%%%%%%%%%%%%%%%%%%%%%%%%%%%%%%%%%%%%%%%%
	
	\subsection{Local well-posedness}
	\label{sec:Local_well_posedness_Y-Hamiltonian}

	 Before delving into energy cascades, we here prove that the Y-Hamiltonian systems are locally well-posed.
	
	\begin{proposition}\label{th:WPY}
Let $\xi >2$ and $(\alpha_{n,0})_{n\geq 0}$ be a sequence of complex numbers such that
\beq \sum_{n=0}^\infty (1+n)^{2\xi}\vert \alpha_{n,0}\vert ^2 \leq R<\infty \ . \label{eq:Ydata}\eeq
There exists $\tau=\tau(R)>0$ and a unique sequence $(\alpha_n(t))_{n\geq 0}$ of $C^1$ functions 
on the interval $[-\tau,\tau]$ satisfying $\alpha_n(0)=\alpha_{n,0}$, the estimate 
\beq
\sup_{|t|\leq \tau} \sum_{n=0}^\infty (1+n)^{2\xi}|\alpha_n(t)|^2<\infty ,
\label{eq:Ybounds}
\eeq
and Equation \eqref{eq:Resonant_Equation} with \eqref{eq:C_nmij_equation_Y}.
\end{proposition}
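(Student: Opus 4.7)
The plan is to mirror the three-step scheme of Proposition~\ref{th:WPZ}: Galerkin cutoff, uniform control of a high Sobolev norm, and a contraction argument as $L\to\infty$. For each $L\geq 1$, one truncates $\alpha_{n,0}^{(L)}:=\alpha_{n,0}\mathbf{1}_{[0,L]}(n)$ and solves the finite-dimensional ODE
\beqnn
i\dot\alpha_n^{(L)}=\sum_{\substack{0\leq j,k,m\leq L\\ j+k=m+n,\ jkmn=0}}C_{nmkj}\alpha_j^{(L)}\alpha_k^{(L)}\overline\alpha_m^{(L)},\qquad \alpha_n^{(L)}(0)=\alpha_{n,0}^{(L)}.
\eeqnn
The Cauchy--Lipschitz theorem gives a local solution, and the same symmetries of $C_{nmkj}$ that produced \eqref{eq:mass} force $\sum_n|\alpha_n^{(L)}(t)|^2$ to be conserved, yielding global existence together with $|\alpha_0^{(L)}(t)|\leq \sqrt R$ uniformly in $t$ and $L$.

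The core of the proof is the a priori bound $\sup_{|t|\leq\tau}H^\xi_L(t)\leq 2R$, where $H^\xi_L(t)=\sum_n n^{2\xi}|\alpha_n^{(L)}(t)|^2$. Symmetrizing in $n\leftrightarrow m$ and $(n,m)\leftrightarrow(k,j)$ gives, exactly as in the Z case,
\beqnn
\frac{dH^\xi_L}{dt}=\frac{1}{2i}\sum_{\substack{0\leq j,k,m,n\leq L\\ j+k=m+n,\ jkmn=0}}C_{nmkj}\bigl(n^{2\xi}+m^{2\xi}-k^{2\xi}-j^{2\xi}\bigr)\alpha_j^{(L)}\alpha_k^{(L)}\overline\alpha_m^{(L)}\overline\alpha_n^{(L)}.
\eeqnn
The key structural feature here is that the constraint $jkmn=0$ collapses the quartic resonant sum into four three-index sums, one for each vanishing index; the four contributions are permuted by the symmetries of the system, so it is enough to analyse the representative case $j=0$, $k=n+m$. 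From \eqref{eq:fgrowth} the ratio $f_nf_m/f_{n+m}$ is uniformly bounded (since $(\tfrac1n+\tfrac1m)^{3/4}\leq 2^{3/4}$ for $n,m\geq 1$), and $\sqrt{(\parm-1)x+2}\lesssim\sqrt{x+1}$, so
\beqnn
|C_{nm(n+m)0}|\lesssim \sqrt{(n+1)(m+1)(n+m+1)}.
\eeqnn

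The main obstacle is that this coefficient grows like $n\sqrt{m}$ when $m\leq n$, one extra $\sqrt{m}$ compared with the $(n+m)$ factor of the Z case. Combined with the elementary estimate $|n^{2\xi}+m^{2\xi}-(n+m)^{2\xi}|\lesssim m\,n^{2\xi-1}$ (valid for $\xi\geq 1/2$ by convexity), the summand is controlled for $m\leq n$ by $n^{2\xi}m^{3/2}|\alpha_0^{(L)}||\alpha_n^{(L)}||\alpha_m^{(L)}||\alpha_{n+m}^{(L)}|$. Cauchy--Schwarz in $n$ with the shift $k=n+m$ yields $\sum_n n^{2\xi}|\alpha_n^{(L)}||\alpha_{n+m}^{(L)}|\leq H^\xi_L$, whence
\beqnn
\Big|\frac{dH^\xi_L}{dt}\Big|\lesssim |\alpha_0^{(L)}|\,H^\xi_L\sum_{m\geq 1}m^{3/2}|\alpha_m^{(L)}|.
\eeqnn
A further Cauchy--Schwarz gives $\sum_m m^{3/2}|\alpha_m^{(L)}|\leq\bigl(\sum_m m^{3-2\xi}\bigr)^{1/2}(H^\xi_L)^{1/2}$, and this geometric-type series converges precisely when $\xi>2$, which is the exact regularity threshold in the proposition. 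Using mass conservation to bound $|\alpha_0^{(L)}|\leq\sqrt R$, one reaches $|dH^\xi_L/dt|\lesssim\sqrt R\,(H^\xi_L)^{3/2}$, which integrates to $H^\xi_L(t)\leq 2R$ on $[-\tau(R),\tau(R)]$ for $\tau(R)$ small.

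The final step is structurally identical to the Z case. Writing the equation for $\alpha_n^{(L+1)}-\alpha_n^{(L)}$, using \eqref{eq:Ybounds} and Lemma~\ref{lem:sums}, one obtains
\beqnn
\sup_{|t|\leq\tau}\Big(\sum_n|\alpha_n^{(L+1)}(t)-\alpha_n^{(L)}(t)|^2\Big)^{1/2}\lesssim |\alpha_{L+1,0}|+(L+1)^{-\xi+1/2},
\eeqnn
which is summable in $L$ since $\xi-1/2>3/2>1$. Uniform convergence of $\alpha_n^{(L)}(t)$ on $[-\tau,\tau]$ together with \eqref{eq:Ybounds} and dominated convergence in the nonlinearity yields the unique sequence $\alpha_n(t)$ satisfying \eqref{eq:Resonant_Equation} with \eqref{eq:C_nmij_equation_Y}.
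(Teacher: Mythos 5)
Your proposal is correct and follows essentially the same route as the paper: Galerkin cutoff with mass conservation, a uniform bound on $H^\xi_L$ driven by the estimate that the coefficient $\sqrt{(1+n)(1+m)(1+k)}$ together with $|n^{2\xi}+m^{2\xi}-k^{2\xi}|\lesssim \min(m,n)\max(n,k)^{2\xi-1}$ places a weight $\min(1+m,1+n)^{3/2}$ on the small index, so that summability of $\sum_m m^{3-2\xi}$ yields the threshold $\xi>2$ and the bound $|dH^\xi_L/dt|\lesssim_R H_L^{\xi\,3/2}$, followed by the same contraction in $L$. The only (immaterial) difference is that you run Cauchy--Schwarz along the shifted pair $(n,n+m)$ rather than splitting the weight as $\max(1+m,1+n)^\xi(1+k)^\xi$ and invoking Lemma~\ref{lem:sums} as the paper does.
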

\begin{proof} We follow the lines of the proof of Proposition \ref{th:WPZ}, first solving a cutoff approximation of \eqref{eq:Resonant_Equation}.
The mass conservation law \eqref{eq:mass} still holds, providing a globally defined approximate solution $(\alpha_n^{(L)}(t))$. 
 We now claim the following key a priori estimate, for some $\tau=\tau(R)>0$ to be chosen,
\beq \sup_{|t|\leq \tau}\sum_{n=0}^L n^{2\xi} |\alpha_n^{(L)}(t)|^2 \leq 2R\ .
\label{eq:Yest}
\eeq 
 Setting again $$H^\xi _L(t):=\sum_{n=0}^L n^{2\xi}|\alpha_n^{(L)}(t)|^2\ ,$$
and using again the symmetries of $C_{nmkj}$, we estimate the derivative of $H^\xi _L(t)$ by using the  following elementary estimate, under the assumption $k=n+m$,
$$\sqrt{(1+n)(1+m)(1+k)}|n^{2\xi}+m^{2\xi}-k^{2\xi}|\lesssim \max (1+m,1+n)^\xi (1+k)^\xi \min (1+m,1+n)^{3/2}\ .$$
Indeed, assuming e.g. $m\leq n$,   we obtain
\begin{align*}
\sqrt{(1+n)(1+m)(1+k)}|n^{2\xi}+m^{2\xi}-k^{2\xi}|&\lesssim \sqrt{1+n}\sqrt{1+k} (1+m)^{3/2} (n^{2\xi -1}+k^{2\xi -1})\\
&\lesssim (1+m)^{3/2} (1+n)^\xi (1+k)^\xi \ .
\end{align*}
Consequently, using \eqref{eq:3sum} and \eqref{eq:mass}, we infer
\begin{align*}
\left |\frac{dH^\xi_L}{dt}\right |&\lesssim \sum_{\substack{0\leq j,k,m,n\leq L\\ nmjk=0, j+k=m+n}}(1+j+m)^{3/2}(1+n)^\xi (1+k)^\xi |\alpha_j^{(L)}(t)||\alpha_k^{(L)}(t)||\alpha_m^{(L)}(t)||\alpha_n^{(L)}(t)|\\
&\lesssim 1+ H^\xi _L(t)^{3/2}\ ,
\end{align*} 
where we have used, for every $\xi >2$,
$\displaystyle{\sum_{n=0}^L n^{3/2}|\alpha_n^{(L)}(t)|\leq B(\xi ) H^\xi_L(t)^{1/2}}\ .$
Since $H^\xi_L (0)$ is uniformly bounded by $R$ in view of \eqref{eq:Ydata}, we get $H^\xi _L(t)\leq 2R$ for $|t|\leq \tau (R)$ small enough.\\
The final step of the proof is achieved as in  the proof of Proposition \ref{th:WPZ}.

\end{proof}

	%%%%%%%%%%%%%%%%%%%%%%%%%%%%%%%%%%%%%%%%%%%%%%%%%%%%%%%%%%%
	%%%%%%%%%%%%%%%%%%%%%%%%%%%%%%%%%%%%%%%%%%%%%%%%%%%%%%%%%%%
	
	\subsection{An invariant manifold}
	
	The key aspect of the Y-Hamiltonian systems is the presence of the following invariant manifold:
	\beq
	\alpha_0(t) = b(t), \qquad \alpha_{n\geq 1}(t) = g_n c(t) p(t)^{n-1},
	\label{eq:invariant_manifold_Y}
	\eeq
	where $b,c,p \in\mathbb{C}$ are three dynamical variables and $g_n = f_n/\sqrt{\frac{\parm-1}{2}n+1}$, with $f_n$ being the time-independent sequence in (\ref{eq:C_nmij_equation_Y}). The condition $|p|^2< x_c = \frac{(\parm-1)^{\parm-1}}{\parm^\parm}$ is still necessary to keep initial data with an exponential suppression of high modes (recall that $f_{n\gg 1} \sim x_c^{-n/2} n^{-3/4}$). The analysis of this manifold follows the ideas presented in section~\ref{sec:Z_Hamiltonian_systems}, so we focus on the results. To do so, we write the generating function for $g_n$ in terms of the one for $f_n$:
	\beq
		G(x) = \sum_{n=1}^{\infty} g_n^2 x^n = \frac{2F_c-F(x)}{(\parm+1)F_c} F(x), \qquad \text{with} \qquad F(x) = \sum_{n=1}^{\infty} f_n^2 x^n, \qquad \text{and} \qquad F_c = F(x_c) = \frac{1}{\parm-1}.
	\eeq
	This relation allows us to keep working with $F(t)$ -i.e., $F(x(t))$ with $x= |p|^2$. Recall its range of admissible values $F\in[0,F_c)$, with $F\to F_c=1/(\parm-1)$ representing the formation of a power law due to $x(F_c)=x_c$.
	
	 The invariant manifold reduces the Y-Hamiltonian systems to three equations
		\begin{align}
		&i \dot{p} = p \left(\frac{ 2 (F+1) N + \left(\frac{F}{F_c}-2\right)E}{2 (F+1) F_c} + \frac{b \bar{c}p}{x}\frac{F}{F_c}+\frac{\bar{b} c \bar{p}}{x}\right), \label{eq:pdot_Y}\\
		&i \dot{b} = b \left(N + \frac{E}{2 (F+1)}\left(2 s + \frac{F}{F_c}\right)\right)+\frac{c\bar{p}}{x} \frac{(s+1) F}{2 (F+1)} E, \label{eq:bdot_Y}\\
		&i \dot{c} = c \left((\parm+1)N + \frac{(\parm+1)}{2 (F+1)}\left(\frac{F}{F_c}-2\right) E \right) + b p \frac{(\parm+1)^2}{2 (F+1)}E. \label{eq:cdot_Y}
	\end{align}
	The amplitudes for the dynamical variables can be written as
	\beq
	|b(t)|^2 = N - \frac{2-F(t)/F_c}{2(F(t)+1)} E, \qquad |c(t)|^2 = \frac{\parm+1}{2\left(F(t)+1\right)^{\parm+1}} E, \quad \text{and} \quad |p(t)|^2 = \frac{F(t)}{(F(t)+1)^{\parm}}.
	\label{eq:b_c_in_terms_of_x_N_E_Y}
	\eeq 
	 They reflect two important differences with the Z-Hamiltonian systems. First, the phenomenon of coherent condensation does not occur since $|c|^2$ never vanishes for $E>0$. Second, there is an upper limit for the energy $E < E_c=2sN/(s-1)$ that comes  from the first expression in Eq.~(\ref{eq:b_c_in_terms_of_x_N_E_Y}). The study of the dynamics within the invariant manifold is however still reduced to an analysis of $F(t)$. This function satisfies an equation of the form
	\beq
		\dot{F}^2 + \Veff(F) = 0,
	\label{eq:Fdot_Y}
	\eeq
	where the potential is a quartic polynomial
	\beq
	\Veff(F) = C_0+C_1 F+C_2 F^2 + C_3 F^3 + C_4 F^4
	\label{eq:Veff_Y}
	\eeq
	with coefficients provided below. They depend on the parameter $\parm$ and the conserved quantities: $N,\ E$, and $S$. The latter has the form:
	\beq
		S = \frac{E \left(\frac{F}{F_c}-2\right) \left(\frac{F}{F_c}+4 s+2\right)}{4 (F+1)^2}+\frac{N \left(\frac{F}{F_c}+2 s\right)}{F+1}+ (s+1) (F+1)^{s-1} (\bar{b}c\bar{p}+b\bar{c}p),
		\label{eq:S_Y}
	\eeq
	coming from $\mathcal{H}=\frac{1}{2}\left(N^2 + E S\right)$. The coefficients in $V(F)$ are:
	\beq
	C_0 = \frac{(E (2 \parm+1)-2 N \parm+S)^2}{(\parm+1)^2}, 
	\eeq
	\beq
	C_1 = \frac{2 (E (2 \parm+1)-2 N \parm+S) (-E (\parm-1) \parm+N (1-3 \parm)+2 S)}{(\parm+1)^2}+2E (\parm+1) (E-N), 
	\eeq
	\begin{multline}
		C_2= \frac{(E (\parm-1) \parm+N (3 \parm-1)-2 S)^2}{ (\parm+1)^2} \\ 
		-\frac{\left(E (\parm-1)^2+4 N (\parm-1)-4 S\right) (E (2 \parm+1)-2 N \parm+ S)+2E(\parm+1)^3
			(E (\parm-1)+2 N)}{2 (\parm+1)^2}, 
	\end{multline}
	\beq
	C_3 = \frac{\left(E (\parm-1)^2+4 N (\parm-1)-4 S\right) (E (\parm-1) \parm+N (3 \parm-1)-2 S)}{2 (\parm+1)^2}, 
	\eeq
	\beq
	C_4 = \frac{\left(E (\parm-1)^2+4 N (\parm-1)-4 S\right)^2}{16 (\parm+1)^2}.
	\eeq
	
	%%%%%%%%%%%%%%%%%%%%%%%%%%%%%%%%%%%%%%%%%%%%%%%%%%%%%%%%%%%
	%%%%%%%%%%%%%%%%%%%%%%%%%%%%%%%%%%%%%%%%%%%%%%%%%%%%%%%%%%%
	
	\subsection{$\mathbf{-5/2}$-cascades}

	Our study of energy cascades in the Y-Hamiltonian systems is reduced to a standard inspection of potential $V(F)$ in (\ref{eq:Fdot_Y}) to find trajectories $F(t)$ that reach the value $F_c$. First, $V(F_c)$ vanishes when $S$ takes the values
	\beq
		S_{\pm} = \frac{(\parm-1)}{\parm^2} \left( \parm (2 \parm+1)N+ \frac{-4\parm^2+\parm+3}{4}E \pm \sqrt{ \frac{(\parm+1)^3}{F_c} E (E_c - E)}\right),
	\eeq
	with $E_c = 2 \parm N/(\parm-1)$. In between, $V(F_c)$ becomes negative, opening the interval $S\in(S_-,S_+)$ where energy cascades occur in finite time (that we denote\label{key} by $T$). This is concluded by using that $V(F)$ is a quartic polynomial that grows to infinity, $V(-1)<0$, $V(0)\geq 0$, and $V(F_c)<0$. When $S=S_{\pm}$ energy cascades are displayed only for the values of $E$ where $V'(F_c)>0$ as illustrated in Fig.~\ref{fig:shapes_V_Y}.

	\begin{figure}[h!]
		\centering
			\includegraphics[width =\textwidth]{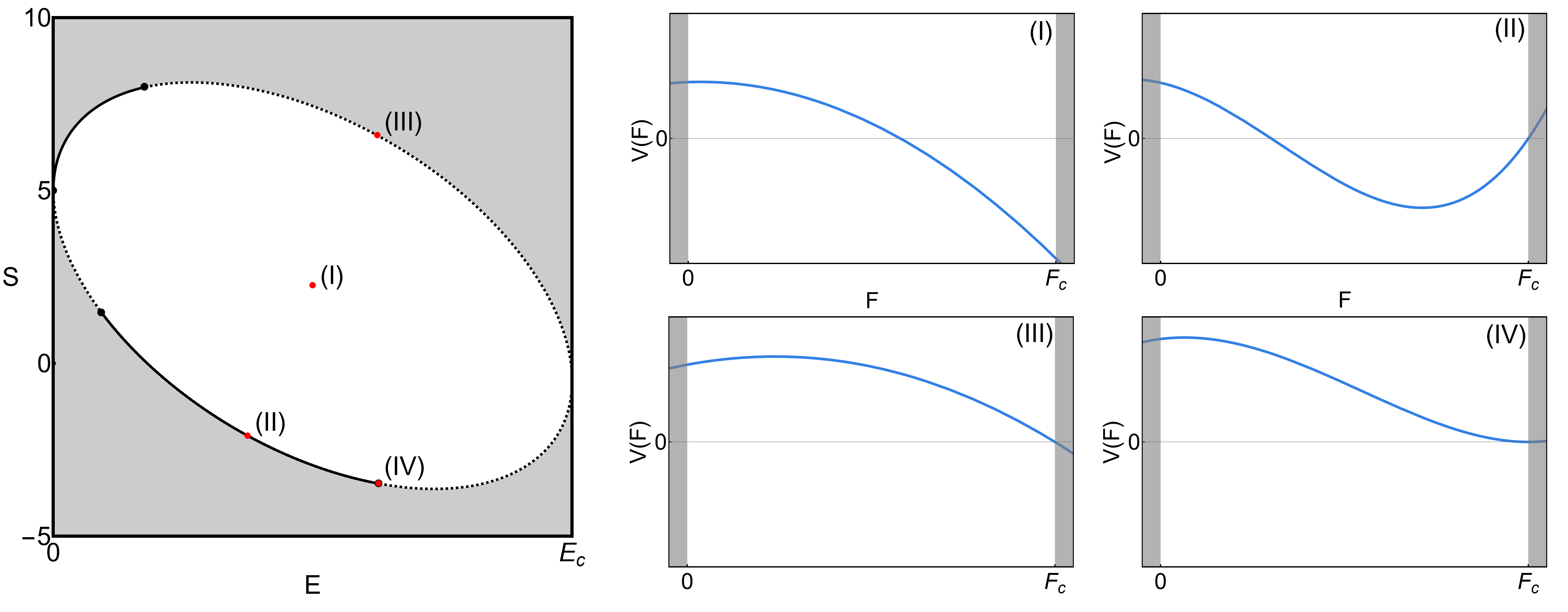}
		\caption{\small $E$-$S$ region where energy cascades take place (white area and black solid lines). In the white area $V(F_c)<0$, while $V(F_c)=0$ on the borders. Solid and dashed lines indicate $V'(F_c)>0$ and $V'(F_c)<0$, respectively. At black dots $V'(F_c)=0$. The red dots mark the values $(E,S)$ associated with the shapes of the potential presented on the right. We fixed $\parm=N=2$.}
		\label{fig:shapes_V_Y}
	\end{figure}
	
	From (\ref{eq:b_c_in_terms_of_x_N_E_Y}), we find that energy cascades lead to the spectral distribution:
	\beq
		|\alpha_0|^2 \underset{t\to T}{\longrightarrow
		} N\left(1- \frac{E}{E_c}\right), \qquad |\alpha_{n\geq 1}|^{2} \underset{t\to T}{\longrightarrow
	}  (\parm^2-1) N \frac{E}{E_c}\ g_n^2 x_c^n,
	\eeq
	Then, the asymptotic spectrum develops a power law:
	\beq
		|\alpha_{n\gg1}|^2 \underset{t\to T}{\sim
		} \frac{\parm+1}{2\parm} E \sqrt{\frac{E_c}{\pi N}} \ n^{-5/2}.
	\eeq
	The spectrum and power-law exponent clearly differ from the Kronecker delta distribution and the power exponent $-3/2$ exhibited by the cascade solutions to the Z-Hamiltonian systems. The power $n^{-5/2}$ leads to the blow-up of Sobolev norms $\xi\geq 3/4$ instead of $\xi>1/2$. For $S\in (S_-,S_+)$, i.e., $V(F_c)<0$
	\beq
	H^{\xi>3/4} = \left(\sum_{n=0}^{\infty} (n+1)^{2\xi} |\alpha_n|^2\right) \sim (T-t)^{3-4\xi}, \quad H^{3/4} = \left(\sum_{n=0}^{\infty} (n+1)^{\frac{3}{2}} |\alpha_n|^2\right) \sim \log\left(\frac{1}{T-t}\right),
	\label{eq:Sobolev_norms_Type_II_Y}
	\eeq
	while when $S=S_-$ or $S_+$ (i.e., $V(F_c)=0$) and $V'(F_c)>0$
	\beq
		H^{\xi>3/4} \sim (T-t)^{6-8\xi}, \qquad H^{3/4}  \sim \log\left(\frac{1}{T-t}\right).
	\label{eq:Sobolev_norms_Type_II_Y_2}
	\eeq
	We here used that $F(t) = F_c - \sqrt{-V(F_c)} (T-t) - V'(F_c)(T-t)^2/4 +o((T-t)^2)$.
	
	\subsection{Structure formation in position space}
	
	We now describe the position space structures that emerge from the energy cascade solutions uncovered for the Y-Hamiltonians. To do so, we use the one-dimensional periodic box 
	\beq
		u(t,\theta) = \sum_{n=0}^{\infty} \alpha_n(t) e^{i n \theta} \qquad \text{with} \qquad \theta \in [0,2\pi),
	\label{eq:u_generating_function_Y}
	\eeq
	which is useful to compare with the Z-Hamiltonian systems. 
	
	Regarding large-scale structures, our $-3/2$-cascade solutions to the Z-family resulted in the formation of a condensate: $|u(t,\theta)|^2$ approached to a flat profile (excluding a narrowing spike, which concentrated all the energy at a point). In contrast, the energy concentration process is not that efficient for our $-5/2$-cascade solutions to the Y-family. The energy remains distributed over the spatial scales, as $|c(t)|$ does not vanish. It results in $\theta$-dependent profiles for $|u(t,\theta)|^2$, as the one illustrated in Fig.~\ref{fig:All_Type_2}. We do not have an explicit expression for those profiles, but they can be approximated by a combination of polylogarithm functions expanding $g_n$ (\ref{eq:invariant_manifold_Y}) in powers of $n$.  
	
	 Regarding small-scale structures, our energy-cascade solutions for each family approach a different type of singularity as $t$ converges to $T$. For the Z-Hamiltonians, $|u(t,\theta)|^2$ gets closer and closer to a point discontinuity at $\theta=-\phi_p(T)$ (a narrowing spike collapses to a point). However, for the Y-Hamiltonians, $|u(t,\theta)|^2$ develops a pointed region around $\theta=-\phi_p(t)$, which sharpens to a cusp. The formation of this structure is led by the following profile,
	 \beq 
	 |u(T,\theta)|^2=\left|   |b(T)|e^{i (\phi_b(T)-\phi_c(T)+\phi_p(T))}+\frac{|c(T)|}{|p(T)|}\sum_{n=1}^{\infty}g_n (x(T))^{n/2}e^{in(\theta +\phi_p(T))}      \right|^2\ .
	  \eeq
	  Recall that, if $x(T)=x_c$, then $g_n (x(T))^{n/2}\sim n^{-5/4}$, which induces a cusp structure in the above trigonometric series in $\theta $.
	
	%%%%%%%%%%%%%%%%%%%%%%%%%%%%%%%%%%%%%%%%%%%%%%
	%%%%%%%%%%%%%%%%%%%%%%%%%%%%%%%%%%%%%%%%%%%%%%
	
	\subsection{Explicit solutions for $\mathbf{-5/2}$-cascades}
	
	We now provide explicit expressions for $-5/2$-cascade solutions to the Y-Hamiltonian systems with $s>1$. To do so, we choose the initial conditions 
	\beq
		\alpha_0(0)= \sqrt{\frac{\parm+1}{\parm+5}N},\qquad \alpha_1(0) = \sqrt{\frac{4N}{\parm+5}}, \qquad \text{and} \qquad \alpha_{n\geq 2}(0) = 0,
	\eeq
	which belong to the invariant manifold with: $b(0) = \alpha_0(0)$, $c(0)=\alpha_1(0)/g_1$, and $p(0)=0$. The conserved quantities satisfy the relation
	\beq
		E = \frac{4}{5+\parm} N, \qquad \text{and} \qquad S = 2N \frac{(\parm-1) (2 + \parm)}{5 + \parm},
	\eeq
	yielding a simple form for the potential in (\ref{eq:Fdot_Y}):
	\beq
	V(F) = -4 \Omega^2 \left(\frac{8}{15} + F\right) F \qquad \text{with} \qquad \Omega = \frac{\sqrt{15} (s+1)}{2 (s+5)}N.
	\eeq
	This is clearly negative for $F>0$, leading to the display of an energy cascade in finite time, as explained above. The solution is
	\beq
	F(t) = \frac{8}{15} \sinh ^2\left(\Omega t\right), 
	\label{eq:F_solution_Y}
	\eeq
	from which we obtain the time when the power-law spectrum arises (i.e., $F(T)=F_c$):
	\beq
	T = \Omega^{-1} \ \text{arcsinh} \left(\sqrt{\frac{15}{8}F_c}\right).
	\eeq
	The phases are calculated from $F(t)$ as explained in the general strategy described in section~\ref{subsec:GeneralStrategy},
	\begin{align}
		& \phi_c(t) = -\sqrt{\frac{3}{5}} (s+1) \Omega t + \frac{(s+1)}{\sqrt{7}} \text{ arctanh}\left(\sqrt{\frac{7}{15}} \tanh (\Omega t)\right),\\
		& \phi_p(t) = -\frac{\pi }{2} -\sqrt{\frac{3}{5}} (s-1) \Omega t +\frac{(s+2)}{\sqrt{7}} \text{ arctanh}\left(\sqrt{\frac{7}{15}} \tanh (\Omega t)\right),\\
		& \phi_b(t) = -2 \sqrt{\frac{3}{5}} \Omega t -\frac{1}{\sqrt{7}} \text{ arctanh}\left(\sqrt{\frac{7}{15}} \tanh (\Omega t)\right)-\arctan\left(\sqrt{\frac{3}{5}} \tanh (\Omega t)\right).
	\end{align}

	The explicit expression for $\alpha_n(t)$ directly follows from combining these phases and the expressions for $|b|^2$, $|c|^2$, and, $|p|^2$ in (\ref{eq:b_c_in_terms_of_x_N_E_Y}) with the invariant manifold (\ref{eq:invariant_manifold_Y}). The phases of $\alpha_{n\geq1}$ are all aligned, as shown by Eq.~(\ref{eq:invariant_manifold_Y}), indicating that our $-5/2$-cascade solutions are driven by highly coherent dynamics. 
	
	%%%%%%%%%%%%%%%%%%%%%%%%%%%%%%%%%%%%%%%%%%%%%%%%%%
	%%%%%%%%%%%%%%%%%%%%%%%%%%%%%%%%%%%%%%%%%%%%%%%%%%
	%%%%%%%%%%%%%%%%%%%%%%%%%%%%%%%%%%%%%%%%%%%%%%%%%%
	%%%%%%%%%%%%%%%%%%%%%%%%%%%%%%%%%%%%%%%%%%%%%%%%%%
	
	\section{Discussion and Conclusions}
	\label{sec:conclusions}

	In this work, we have delved into the  deterministic study of turbulence in Hamiltonian systems by presenting two types of energy cascades driven by highly coherent dynamics. Our $-3/2$-cascade solutions represent the phenomenon of coherent condensation in finite time. This phenomenon, first uncovered in a single Hamiltonian system in Ref.~\cite{Biasi}, is shown here to occur across an infinite number of systems. In contrast, our $-5/2$-cascade solutions have not been previously observed in the systems under consideration. They give rise to a large-scale structure other than a condensate, along with a cusp at small scales. Additionally, our results generalize previous models known to display turbulent behaviors \cite{GG,BE,Xu}, such as the cubic Szeg\H{o} equation and the truncated-Szeg\H{o} equation.

We expect that the progress made in this work leads to better comprehension of energy cascades in coherent (phase-sensitive) regimes of physical systems. In this regard, our results are very informative because illustrate, via explicit solutions, the display of energy cascades in momentum space and the formation of small- and large-scale structures in position space. Finding these phenomena in systems that directly come from well-established models of physics would be particularly interesting, such as the ones associated with the nonlinear Schr\"odinger equation. That would increase the chances of observing energy cascades and condensation processes in new regimes of cold atoms and nonlinear optics, where condensate formation is a topic of current research; see Ref.~\cite{Condensation_2020} and references therein. 

An interesting collection of questions arises from this work. The most general is to understand how common the energy cascades we have uncovered are: Are they displayed by a larger class of systems? Does this class cover physically motivated models? This is followed by predictions of energy cascades and their properties: Where does the power law come from? Can a system develop a variety of power laws or a single one? Which properties of the system would determine the power law/s? Is it possible to determine it/them from a simple analysis of the couplings $C_{nmkj}$? Another interesting collection of questions regards the phenomena that accompanies each kind of energy cascade. For the moment, the $-3/2$-cascades  and $0$-cascades uncovered in the literature only exhibit a coherent condensation phenomenon in both finite \cite{Biasi} (former) and infinite time \cite{BE,Xu} (latter). However, this is most likely associated with their observations via explicit solutions, which only provide information about specific dynamics. We expect that meticulous numerical studies will uncover a rich collection of phenomena associated with the strong transfer of energy towards high modes (see \cite{BMR,MBox,Biasi}). In this regard, our solutions for energy cascades should work invaluable for benchmarking.

Finally, at a more specific level, we wonder whether the Z-Hamiltonian and Y-Hamiltonian systems possess Lax pair structures, or if this property is exclusive to only one member of each family: the cubic Szeg\H{o} equation \cite{GG} and the truncated Szeg\H{o} equation \cite{BE}, respectively. In line with this, we also wonder whether our systems possess multi-dimensional invariant manifolds beyond the ones we have analyzed. These questions remain largely open for now.
	
	%%%%%%%%%%%%%%%%%%%%%%%%%%%%%%%%%%%%%%%%%%%%%%
	%%%%%%%%%%%%%%%%%%%%%%%%%%%%%%%%%%%%%%%%%%%%%%
	%%%%%%%%%%%%%%%%%%%%%%%%%%%%%%%%%%%%%%%%%%%%%%
	%%%%%%%%%%%%%%%%%%%%%%%%%%%%%%%%%%%%%%%%%%%%%%
	
		\noindent {\large \bf Acknowledgments:} The project that gave rise to these results received the support of a fellowship from the ”la Caixa” Foundation (ID 100010434). The fellowship code is LCF/BQ/PI24/12040029. It also received the support of the LabEx ENS-ICFP: ANR-10-LABX-0010/ANR-10-IDEX-0001-02 PSL*, and the Mar\'ia de Maeztu grant CEX2023-001318-M  funded
		by MICIU/AEI /10.13039/501100011033. The second author was partially supported by the French Agence Nationale de la Recherche under the ANR project ISAAC–ANR-23–CE40-0015-01.
	
		\noindent {\large \bf Data Availability:} Data sharing is not applicable to this article, as no datasets were generated or analyzed during the development of this work.
		
		\noindent {\large \bf Declaration:} The authors declare that they have no Conflict of interest.
	
	%%%%%%%%%%%%%%%%%%%%%%%%%%%%%%%%%%%%%%%%%%%%%%
	%%%%%%%%%%%%%%%%%%%%%%%%%%%%%%%%%%%%%%%%%%%%%%
	%%%%%%%%%%%%%%%%%%%%%%%%%%%%%%%%%%%%%%%%%%%%%%
	%%%%%%%%%%%%%%%%%%%%%%%%%%%%%%%%%%%%%%%%%%%%%%
	
	\appendix

		%%%%%%%%%%%%%%%%%%%%%%%%%%%%%%%%%%%%%%%%%%%
	%%%%%%%%%%%%%%%%%%%%%%%%%%%%%%%%%%%%%%%%%%%	
	%%%%%%%%%%%%%%%%%%%%%%%%%%%%%%%%%%%%%%%%%%%
	%%%%%%%%%%%%%%%%%%%%%%%%%%%%%%%%%%%%%%%%%%%

	\section{Details on the derivation of the system of equations (\ref{eq:pdot})-(\ref{eq:cdot})} \label{appendix:Function_F}
	
	We here provide details on the derivation of the equations for $\dot{p}$, $\dot{b}$ and $\dot{c}$ in (\ref{eq:pdot})-(\ref{eq:cdot}). 
	The process for the Y-Hamiltonian systems follows the same ideas, so this is omitted. One first uses the condition $n+m=k+j$ to write the equations in the form
	\beq
		i\frac{d\alpha_n}{dt} = \underset{n+m=k+j}{\underbrace{\sum_{m=0}^{\infty}\sum_{k=0}^{\infty}\sum_{j=0}^{\infty}}}C_{nmkj} \bar{\alpha}_m\alpha_k\alpha_j = \sum_{m=0}^{\infty}\sum_{k=0}^{n+m}C_{nmk(n+m-k)} \bar{\alpha}_m\alpha_k\alpha_{n+m-k}. 
	\eeq
	Then, the expression for the invariant manifold $\alpha_0 = b$ and $\alpha_{n\geq 1} = f_n c p^{n-1}$ in (\ref{eq:invariant_manifold}) is used.
	By gathering terms with the same prefactor $|c|^2 c$, $|c|^2 b$, $\bar{b} c c$, and $\bar{c}bb$ and recalling that $f_n = \sqrt{A_n^{(\parm)}}$, where $A_n^{(\parm)}$ are the Fuss-Calan numbers in (\ref{eq:Fuss_Catalan_numbers}), one obtains sums of the form
	\beq
		\sum_{m=1}^{\infty} A_m^{(\parm)} x^{m}, \quad \sum_{m=1}^{\infty} m A_m^{(\parm)} x^{m}, \quad \text{and} \quad \sum_{k=1}^{n+m-1} A_{n+m-k}^{(\parm)} A_{k}^{(\parm)}.
	\eeq
	The first two are the generating function $F(x)$ in (\ref{eq:generating_function_F}) and $xF'(x)$, respectively. The last one is computed by using the following property of $A_n^{(\parm)}$:
	\beq
		\sum_{k=1}^{M-1} A_{M-k}^{(\parm)} A_{k}^{(\parm)} = \frac{M-1}{\frac{\parm-1}{2}M+1} A_M^{(\parm)} \qquad \text{with }\quad M=2,3,4...
	\eeq
	This relation is the summation identity for our set of Fuss-Catalan numbers in (\ref{eq:Fuss_Catalan_numbers}) \cite{FussCatalan} after moving the terms $k=0$ and $k=M$ in the sum to the right-hand side.

	After substituting all the sums and dividing by $f_n p^{n-1}$ in the equation for $\dot{\alpha}_{n\geq 1}$, the resulting expression is linear in $n$ on both sides. Equating the coefficients gives the equations for $\dot{c}$ and $\dot{p}$. The equation for $\dot{b}$ comes from $\dot{\alpha}_0$ after following similar considerations. We further develop the expressions by using the conserved quantities written in terms of the variables $b,\ c,\ p$ to replace $|b|^2$ and $|c|^2$
	\beq
	N = |b|^2 + E \frac{F(x)}{x F'(x)}, \qquad \text{and} \qquad E = |c|^2 F'(x).
	\eeq
	 Finally, we use the differential equation given in (\ref{eq:Differential_eq_F}) to write $F'$ in terms of $F$. That equation comes from the generating function for the Fuss-Catalan numbers $B(x) = \sum_{n=0}^{\infty} A_n^{(\parm)} x^n$ by realizing that $B(x) = 1+F(x)$. That generating function satisfies the relation $B(x)-1 = x B(x)^\parm$ \cite{FussCatalan}, from which the expressions for $F'(x)$ and $x(F)$ in (\ref{eq:Differential_eq_F}) follow.

%%%%%%%%%%%%%%%%%%%%%%%%%%%%%%%%%%%%%%%%%%%%%%
%%%%%%%%%%%%%%%%%%%%%%%%%%%%%%%%%%%%%%%%%%%%%%
%%%%%%%%%%%%%%%%%%%%%%%%%%%%%%%%%%%%%%%%%%%%%%
%%%%%%%%%%%%%%%%%%%%%%%%%%%%%%%%%%%%%%%%%%%%%%

\end{document}